\newtheorem{lemma}{Lemma}
\newtheorem{theorem}{Theorem}
\newcolumntype{M}[1]{>{\centering\arraybackslash}m{#1}}
\title{A Tight 4/3 Approximation for Capacitated Vehicle Routing in Trees}
\author{Amariah Becker \thanks{amariah\_becker@brown.edu
}}
\affil{Department of Computer Science, Brown University}
\begin{document}

\maketitle

\begin{abstract}
	Given a set of clients with demands, the {\sc Capacitated Vehicle Routing} problem is to find a set of tours that collectively cover all client demand, such that the capacity of each vehicle is not exceeded and such that the sum of the tour lengths is minimized.  In this paper, we provide a $4/3$-approximation algorithm for {\sc Capacitated Vehicle Routing} on trees, improving over the previous best-known approximation ratio of $(\sqrt{41}-1)/4$ by Asano et al.\cite{asano}, while using the same lower bound.  Asano et al. show that there exist instances whose optimal cost is $4/3$ times this lower bound.  Notably, our $4/3$ approximation ratio is therefore tight for this lower bound, achieving the best-possible performance.
\end{abstract}

\section{Intro}

Vehicle-routing problems address how a service can best be provided to meet the demand from a set of clients.  These problems arise very naturally in both commercial and public planning.  Real world vehicle-routing problems must account for the capacities of the vehicles, which limit the amount of client demand that can be met in a single trip.  We formalize this problem as finding tours in a graph:

Given a graph $G = (V,E)$, a specified \emph{depot} vertex $r$, a \emph{client set} $S$, an edge length function $l: E
\rightarrow \mathbb{Z}^{\geq 0}$, a demand function $d: S
\rightarrow \mathbb{Z}^{\geq 0}$ and $Q>0$, the {\sc Capacitated Vehicle Routing} problem
is to find a set of tours of minimum total length such that each tour includes $r$ and the tours collectively
\emph{cover} all demand at every client and such that no tour covers more than $Q$
demand.   A tour can only cover demand from clients along the tour, but it may pass by some clients without covering their demand.  

There are two common variants of this problem: \emph{splittable} and \emph{unsplittable}.  In the splittable variant, the demand of a client can be collectively covered by multiple tours, and in the unsplittable variant, the entire demand of a client must be covered by the same tour.

Both variants of this problem are NP-hard and therefore unlikely to admit a polynomial-time exact solution, but constant factor approximations can be found in polynomial time~\cite{haimovich}.  A natural question is whether better performance can be achieved for restricted graph classes.  One line of research has focused on approximating {\sc Capacitated Vehicle Routing} in trees.  Though the problem remains NP-hard in trees~\cite{labbe}, better constant-factor approximations have been found than for general metrics.  

Hamaguchi and Katoh~\cite{hamaguchi} noted a simple lower bound for the splittable-variant of {\sc Capacitated Vehicle Routing} in trees: every edge must be traversed by at least enough vehicles to accommodate all demand from the clients that use the edge on the shortest path to the depot.  They then use this lower bound (denoted $LB$) to give a 1.5 approximation~\cite{hamaguchi}.  Following this work, Asano et al.~\cite{asano} use the same lower bound to achieve a $(\sqrt{41}-1)/4$-approximation.  They also prove the following lemma:

\begin{lemma}[\cite{asano}]\label{lem:tight}
There exist instances of {\sc Capacitated Vehicle Routing} in trees whose optimal solution costs $4/3\cdot LB$.
\end{lemma}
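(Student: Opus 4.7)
The plan is to exhibit a concrete instance and then evaluate both $LB$ and $OPT$ by direct calculation.

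First, I would take $Q = 3$ and the tree consisting of the depot $r$, one internal vertex $u$ at distance $L$ from $r$, and three leaves $v_1, v_2, v_3$ each at distance $1$ from $u$, with every $v_i$ having demand $2$; the parameter $L$ will be tuned at the end. Computing the lower bound edge by edge, the edge $ru$ has total demand $6$ beyond it, contributing $2 \lceil 6/3 \rceil \cdot L = 4L$, and each edge $uv_i$ has demand $2$ beyond it, contributing $2$. Thus $LB = 4L + 6$.

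Next, I would lower-bound $OPT$. The key structural fact is that $2 + 2 > Q$, so no single tour can cover two leaves entirely. In the unsplittable setting this immediately forces at least three tours, each devoted to one leaf and each of cost at least $2(L+1)$, so $OPT \geq 6(L+1) = 6L + 6$; three single-leaf tours clearly realize this bound. Choosing $L = 3$ yields $OPT = 24$, $LB = 18$, and ratio exactly $4/3$.

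The hard part will be handling demand splitting. A tour may carry one whole leaf's demand together with one unit from a second leaf (total demand $Q$), and two such tours can cover everything at cost $4L + 8$; in the single gadget above this drops the splittable ratio to $6/5$ at its natural crossover. If the lemma targets the splittable variant, the tight instance must therefore be a larger construction in which any split across subtrees incurs extra edge traversals that outweigh the savings, for example by replicating the gadget and arranging it so cross-gadget tours are prohibitively expensive. The delicate step will then be a complete case analysis over all feasible fractional tour assignments, showing that every one has cost at least $(4/3) \cdot LB$.
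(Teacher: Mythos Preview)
The paper does not actually prove this lemma; it is quoted from Asano, Katoh, and Kawashima~\cite{asano}, so there is no in-paper argument to compare against. That said, two substantive points about your proposal are worth flagging.

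First, the setting of the paper---and hence of $LB$---is the \emph{splittable} variant (the algorithm throughout allows tours to cover fractional demand at a client, and the lower bound $LB$ is explicitly introduced as the splittable lower bound). Your $Q=3$, three-leaf gadget with $L=3$ does give $OPT/LB = 4/3$ exactly, but only for the \emph{unsplittable} problem. As you yourself compute, once splitting is allowed two tours of total cost $4L+8$ suffice, and on your instance with $L=3$ this gives $OPT=20$ against $LB=18$, a ratio of $10/9$. So the concrete construction you carry out is for the wrong variant and does not establish the lemma as used in the paper.

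Second, for the splittable case your proposal stops at a diagnosis: you correctly observe that a single gadget tops out at $6/5$ and that a larger construction is needed, but you neither give that construction nor the lower-bound argument for its optimum. The paper's own phrasing in the related-work section makes clear the result is \emph{asymptotic} (``instances whose optimal cost is, asymptotically, $4/3$ times this lower bound value''), so what is required is a family of instances with $OPT/LB \to 4/3$, together with a proof that no feasible (splittable) solution beats that cost. Your final paragraph identifies this as ``the hard part'' and sketches only the shape of an argument; as it stands, the splittable case---the one the lemma is actually about---remains unproved in your proposal.
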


This shows that the best possible approximation ratio using this lower bound would be a $4/3$-approximation.  Our result, stated in Theorem~\ref{thm:main}, achieves this ratio, and is therefore tight with respect to $LB$.  No further improvements over our result can be made until a better lower bound is found.

\begin{theorem}\label{thm:main} There is a polynomial-time 4/3 approximation for 
{\sc Capacitated Vehicle Routing} in trees.
\end{theorem}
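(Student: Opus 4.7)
My plan is to give a bottom-up algorithm on the tree rooted at the depot $r$, together with an edge-by-edge accounting against $LB$. At a high level, the algorithm should process vertices from the leaves upward, maintaining at each internal vertex $v$ the set of \emph{partial} tours carrying demand picked up from the subtree $T_v$ that have not yet been closed off (i.e., returned to $r$). The central design decision is a local consolidation rule: at each vertex the algorithm examines the partial tours arriving from the children subtrees and decides which to close off at $v$ (paying an immediate round trip from $r$ to $v$ along the current path) and which to merge and push further up toward $r$. The invariant to maintain is that the number of tours crossing the edge above $v$ equals the lower-bound count $\lceil D_v / Q \rceil$, where $D_v$ is the total undelivered demand in $T_v$, except in carefully chosen overflow configurations that the analysis can afford.

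First I would formalize a local consolidation subproblem: given arriving partial loads $q_1,\ldots,q_k \in (0,Q]$ with $\sum q_i = D_v$, how should we repack them into $\lceil D_v/Q \rceil$ outgoing partial loads (plus possibly some closed-off tours) so as to minimize the excess cost over $LB$? I expect the right rule is to close off tours greedily whenever the remaining loads can be perfectly repacked into the target outgoing count, and otherwise to push a carefully chosen combination upward. A DP over subtrees tracking the load distribution of partial tours being pushed up — or, to get polynomial running time, a bounded-size sufficient statistic of it obtained by grouping loads into coarse ranges — should suffice.

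Next I would analyze the algorithm by amortizing the excess cost over $LB$ on an edge-by-edge basis. On any edge $e$ where the algorithm uses exactly $\lceil d_e/Q\rceil$ tours, the edge contributes nothing above $LB$. The charging scheme must show that whenever the algorithm uses one extra tour across $e$, there are at least $3\,l(e)$ units of mandatory cost elsewhere in the tree (already counted in $LB$) that can absorb the $2\,l(e)$ excess, yielding the $4/3$ ratio. I would expect the tight instances from Lemma~\ref{lem:tight} to guide the design of the local consolidation rule, since any algorithm matching $4/3$ must exactly replicate the routing structure on those instances.

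The main obstacle will be getting the amortization exactly right. Because the $4/3$ ratio is tight, the analysis has no slack: a purely local edge-by-edge analysis cannot beat $3/2$ (the Hamaguchi--Katoh ratio), so a global discharging argument is essential. Designing the consolidation rule so that every ``bad'' edge (one carrying an extra tour) points to a \emph{disjoint} pocket of $LB$-cost of size at least three times its excess — and proving this disjointness globally across the tree — is the heart of the argument, and the place where I would spend most of the effort.
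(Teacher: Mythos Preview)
Your proposal is a research plan, not a proof: the two load-bearing steps---the local consolidation rule and the global charging scheme---are left entirely unspecified, and you yourself flag them as ``the heart of the argument'' and ``the place where I would spend most of the effort.'' As written there is nothing to verify; the theorem is not proved.

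More substantively, the approach diverges from the paper's and it is not clear it can be completed. The paper neither maintains partial tours nor runs a DP on load distributions, and it has no discharging argument. Instead it works in rounds: each round first applies \emph{safe operations} (condense, unzip, group, unite, slide) to simplify the instance, then locates a minimally unsettled branch and handles it by one of three explicit cases (Lemmas~\ref{lem:sibling_pchain}, \ref{lem:trident}, \ref{lem:short_pchain}). The central structural object is the \emph{$p$-chain}, and the central algorithmic device is the \emph{cascade} (Section~\ref{sec:cascades}), a specific set of $p$ tours whose cost exceeds $LB$ only on the short $e_i^2$ edges. The difficult configurations you anticipate are exactly the long $p$-chains; the paper's key insight is not to absorb their excess via a global charging scheme but to \emph{defer} them rootward, and then resolve them at the root (Lemma~\ref{lem:root_pchain}) or in sibling pairs (Lemma~\ref{lem:sibling_pchain}), where the cascade arithmetic closes to $4/3$. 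Every tour set is analyzed in isolation against the reduction it causes in $LB$; there is no cross-edge amortization at all.

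Your proposed DP with a ``bounded-size sufficient statistic'' for partial-load vectors is also a genuine worry. The number of partial loads arriving at a vertex can be large, and because the $4/3$ bound is tight there is no slack: any lossy coarsening of loads into ranges risks a constant-factor loss in the ratio, and you give no argument that a polynomial-size statistic suffices. The paper sidesteps this entirely---safe operations collapse local structure into a handful of concrete shapes, each with an explicit hand-analyzed tour set, so no DP on continuous load data is ever needed.
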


\subsection{Related Work}

As {\sc Capacitated Vehicle Routing} generalizes the {\sc Traveling Salesman Problem (TSP)} (which is the special case of $Q=|V|$) it is NP-hard, and in general metrics is APX-hard~\cite{papadimitriou}.  

For general metrics, a technique called \emph{Iterated Tour Partitioning} starts with a TSP solution, partitions this tour into paths of bounded capacity, and then makes vehicle routing tours by adding paths from the depot to each endpoint of each path~\cite{haimovich}.  Iterated Tour Partitioning results in a polynomial time $(1+(1-1/Q)\alpha)$-approximation for splittable {\sc Capacitated Vehicle Routing}, where $\alpha$ is the approximation ratio of the TSP tour. A similar approach can be used for the unsplittable variant, resulting in a $(2+(1-2/Q)\alpha)$-approximation~\cite{altinkemer1987}.  Using Christofides' 1.5-approximation for TSP ~\cite{christofides}, these ratios are $(2.5-\frac{1.5}{Q})$ and $(3.5-\frac{3}{Q})$ respectively.  No significant improvements over iterated tour partitioning are known for general metrics.

Even in trees, splittable {\sc Capacitated Vehicle Routing} is NP-hard by a reduction from bin packing ~\cite{labbe}, and unsplittable {\sc Capacitated Vehicle Routing} is NP-hard to even approximate to better than a 1.5-factor~\cite{golden}.  Since depth-first search trivially solves TSP optimally in trees, iterated tour partitioning already gives a $(2-\frac{1}{Q})$-approximation for splittable demands in trees and a $(3-\frac{2}{Q})$-approximation for unsplittable demands in trees.  Labbe et al. improved this to a 2-approximation for the unsplittable-demand variant~\cite{labbe}.  For splittable {\sc Capacitated Vehicle Routing} in trees, Hamaguchi and Katoh ~\cite{hamaguchi} define a natural lower bound on the cost of the optimal solution and give a 1.5-approximation algorithm that yields a solution with cost at most 1.5 times this lower bound.  Asano, Katoh, and Kawashima~\cite{asano} improve the ratio to $(\sqrt{41}-1)/4$ using this same lower bound.  They also show that there are instances whose optimal cost is, asymptotically, $4/3$ times this lower bound value.  This implies that if a $4/3$-approximation algorithm exists that uses this lower bound it would be \emph{tight} (i.e. best possible).  We answer this open question in this paper by providing such a result.

All of the above results allow for arbitrary capacity $Q$.  Even for fixed capacity $Q\geq 3$, {\sc Capacitated Vehicle Routing} is APX-hard in general metrics~\cite{asano1997}.  For fixed capacities, {\sc Capacitated Vehicle Routing} is polynomial-time solvable in trees, but is NP-hard in other metrics.  For instances in the Euclidean plane ($\mathbb{R}^2$), polynomial-time approximation schemes (PTAS) are known for instances where $Q$ is constant ~\cite{haimovich}, $Q$ is $O(\log n/\log\log n)$~\cite{asano1997}, and $Q$ is $\Omega(n)$~\cite{asano1997}.  For higher-dimensional Euclidean spaces $\mathbb{R}^d$, a PTAS is known for when $Q$ is $O(\log^{1/d}n)$~\cite{khachay2016}.  While a \emph{quasi}-polynomial-time approximation (QPTAS) is known for arbitrary $Q$ on instances in the  Euclidean plane ($\mathbb{R}^2$)~\cite{das2010}, no PTAS is known for arbitrary $Q$ in \emph{any} non-trivial metric.

Recently, the first approximation schemes for non-Euclidean metrics were designed.  Specifically, a quasi-polynomial time approximation scheme is known for planar and bounded-genus graphs when $Q$ is fixed ~\cite{beckerquasi}, and a polynomial-time approximation scheme is known for graphs of bounded highway-dimension when $Q$ is fixed ~\cite{becker_hwy_dim}.

\subsection{Techniques}

Our work extends the techniques introduced by Asano et al. ~\cite{asano} which itself was an extension of the work of Hamaguchi and Katoh~\cite{hamaguchi}.  Specifically, Hamaguchi and Katoh describe a very natural lower bound that arises on tree instances~\cite{hamaguchi}: the number of tours that traverse each edge must be at least enough to cover all demand in the subtree below the edge (the tree is assumed to be rooted at the depot).  This introduces a minimum \emph{traffic} value on the edge.  Multiplying this value by two (each tour crosses each edge once in each direction) times the weight of the edge and summing over all edges provides a lower bound on the cost of any feasible solution.

The algorithm of Asano et al.~\cite{asano} proceeds in a sequence of rounds.  In each round, a set of tours is identified such that the ratio of the cost of these tours to the reduction to the lower bound that results from covering the demand on these tours is bounded by some constant $\alpha$.  The key is that although a given tour itself may cost more than $\alpha$ times its reduction to the lower bound, collectively the set of tours has the desired ratio.  If after a round ends, no uncovered demand remains, then the union of these sets of tours is a feasible solution with cost at most $\alpha$ times the lower bound, which is at most $\alpha$ times the optimal cost.  For Asano et al. ~\cite{asano}, $\alpha = (\sqrt{41}-1)/4$.  We use a similar approach to achieve $\alpha = 4/3$.

Asano et al. ~\cite{asano} also introduced the idea of making \emph{safe} modifications to the instance.  That is, modifying the structure of the instance in such a way that does not increase the value of the lower bound or decrease the optimum cost (although it may increase the optimum cost) and such that a feasible solution in the modified instance has a corresponding feasible solution in the original instance.  These modifications can be made safely at any point in the algorithm.  Our algorithm also makes use of safe modifications, although the ones that we define differ somewhat from those defined by Asano et al.~\cite{asano}.

These modifications allow us to reason better about how the resulting instance must be structured.  The idea is that the modifications can be made until one of a few cases arise.  Each case has a corresponding \emph{strategy} to find a set of tours with the desired cost-to-savings ratio.  

Specifically, the algorithm of Asano et al.~\cite{asano} classifies the \emph{leafmost} subtrees containing at least $2Q$ units of demand into one of a few cases.  The main obstacle in extending their  algorithm to a $4/3$-approximation is that one of the cases does not seem to have a good strategy.  On the other hand, modifying the algorithm to instead classify the \emph{leafmost} subtrees containing at least $\beta Q$ units of demand for some $\beta >2$ can greatly increase the number of cases that arise.

We overcome this obstacle by generalizing the difficult case into what we can a \emph{$p$-chain} (See Figure~\ref{fig:p_chain}).  Our key insight is that even arbitrarily large $p$-chains can be addressed efficiently in sibling pairs and at the root.  Our algorithm effectively delays addressing the difficult cases by pushing it rootward until it finds a pair or reaches the root and is thus easy to address.  To keep the number of cases small, we address easy cases as they emerge, and require that any remaining difficult case must have a specific structure that the algorithm can easily detect in subsequent rounds.

\section{Preliminaries}

We use \emph{$OPT$} to denote the cost of an optimization problem.  For a minimization problem, a polynomial-time $\alpha$-approximation algorithm is an algorithm with a runtime that is polynomial in the size of the input and returns a feasible solution with cost at most $4/3\cdot OPT$.

When the input graph $G$ is a tree it is assumed to be rooted at the depot $r$.  Let $P[u,v]$ denote the unique path from $u$ to $v$ in the tree.  Recall that the problem gives a length $l(e)\geq 0$ for each edge $e$, and let  $l(P[u,v]) = \sum_{e \in P[u,v]}l(e)$ denote the shortest path distance between $u$ and $v$.

For rooted tree $T=(V,E)$, and $v\in V$ let $T_v$ denote the subtree rooted
at $v$ and $d(T_v)$ be the total demand from vertices in $T_v$.  

The parent of a vertex $v$ is the vertex $u$ adjacent to $v$ in $P[v,r]$, and the parent of $r$ is undefined. An edge labeled $
(u,v)$ indicates that $u$ is the parent of $v$.  
The parent of an edge $(u,v)$ is the edge $(parent(u),u)$ and is undefined if $u=r$.  If $v$ has parent vertex $u$, we call $B = T_v \cup \{(u,v)\}$ a
\emph{branch} at $u$, and edge $(u,v)$ the \emph{stem} of the branch.  The parent of a branch at $u$ with stem $e$ is the branch at $parent(u)$ with stem $parent(e)$ and is undefined if $u=r$.  A vertex (resp. edge, branch) with a parent is said to be a child vertex (resp. edge, branch) of that parent.

\subsection{Lower Bound}
Since all demand must be covered and each tour can cover at most $Q$ demand, each
edge $e = (u,v)$ must be traversed by enough tours to cover $d(T_v)$ demand.  
We call this value the \emph{traffic} on the edge $e$, denoted $f(e)$.
Namely,
$f(e) = \lceil\frac{d(T_v)}{Q}\rceil$ tours.   Each such tour
traverses the edge exactly twice
(once in each direction).  We say that the lower bound $LB(e)$ of the contribution
of edge $e = (u,v)$ to the total solution cost is therefore,  $$LB(e) = 2\cdot l(e)\cdot
f(e) = 2\cdot l(e)\lceil
\frac{d(T_v)}{Q}\rceil$$ and that the lower bound $LB$ on $OPT$ is $$LB
= \sum_{e\in E}LB(e)$$
For convenience, we scale down all demand values by a factor of $Q$ and set $Q=1$.
 We also assume that the vertices with positive demand are exactly the leaves:  If
 some internal vertex $v$ has positive demand $d(v)$ we can add a vertex $v'$ with
 demand $d(v') = d(v)$ and edge $(v,v')$ of length zero and set $d(v)$ to zero.  Alternatively,
 if some leaf $v$ has zero demand, no tour in an optimal solution will visit $v$, so
 $v$ and the edge to $v$'s parent can be deleted from the graph.  Finally, we assume that no non-root vertex has degree exactly two, as no branching would occur at such a vertex, so the two incident edges can be spliced into one.

A very high level description of the algorithm is as follows: iteratively identify
sets of tours
in which the ratio of the cost of the tours to the reduction in cost to the lower bound,
$LB$, is at most $4/3$.  We call such a set of tours a $4/3$-\emph{approximate tour
set}.

We say that a $4/3$-approximate tour set \emph{removes} the demand that the tours cover.  If such a tour set removes all demand from a branch, we say that this branch has been \emph{resolved}.  After a branch is resolved, it is convenient to think of it as having been deleted from the tree and proceed with the smaller instance.

We note that any $4/3$-approximation algorithm for {\sc Capacitated Vehicle Routing} trivially generates a $4/3$-approximate tour set.  The converse, is also straightforward:

\begin{lemma}\label{lem:tours}
If, after iteratively finding and removing demand from $4/3$-approximate tour sets,
no
demand remains, then
the union of all tours sets is a $4/3$-approximation for {\sc Capacitated Vehicle Routing}.
\end{lemma}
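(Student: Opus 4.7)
The plan is a simple telescoping argument on the lower bound $LB$. I would let $S_1, S_2, \ldots, S_k$ denote the sequence of $4/3$-approximate tour sets produced by the iterative procedure, and let $LB_i$ denote the value of the lower bound computed on the residual instance after the demand covered by $S_1, \ldots, S_i$ has been removed, so that $LB_0 = LB$. By the definition of a $4/3$-approximate tour set, I get the per-round inequality $\mathrm{cost}(S_i) \le (4/3)(LB_{i-1} - LB_i)$ for each $i$.

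Summing this inequality over $i = 1, \ldots, k$, the right-hand side telescopes to $(4/3)(LB_0 - LB_k)$. The hypothesis that no demand remains after round $k$ implies that every edge $(u,v)$ of the residual instance has $d(T_v) = 0$, so $f(e) = 0$ for every edge, and therefore $LB_k = 0$. Combining this with the standard fact that $LB \le \OPT$ (justified by the per-edge traffic bound $f(e) = \lceil d(T_v)/Q \rceil$ in the preliminaries), I obtain $\mathrm{cost}\bigl(\bigcup_i S_i\bigr) \le (4/3)\cdot LB \le (4/3)\cdot \OPT$. Feasibility of the union is immediate: each $S_i$ consists of valid capacitated tours through $r$, and by hypothesis their union collectively covers all client demand.

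There is no real obstacle to this argument; the only subtlety worth stating explicitly is that $LB$ is not a static quantity but depends on the current residual instance, so the reduction referenced in the definition of a $4/3$-approximate tour set must be interpreted as the telescoping difference $LB_{i-1} - LB_i$. Once this bookkeeping convention is adopted the proof is essentially a one-line accounting, which is why the work of the paper lies entirely in producing the $4/3$-approximate tour sets themselves rather than in this reduction.
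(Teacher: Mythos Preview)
Your proposal is correct and is precisely the intended argument. The paper itself does not write out a proof of this lemma, declaring the claim ``straightforward'' immediately before stating it; your telescoping on the residual lower bound, together with $LB_k=0$ when no demand remains and $LB\le\OPT$, is exactly the one-line accounting the paper has in mind.
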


Given this, we make one more simplifying assumption that each leaf $v$ has demand
$d(v) < 1$. Assume to the contrary that for some leaf $v$, $d(v) \geq 1$.  A tour
that goes directly from $s$ to $v$ and back and covers one unit of demand at $v$ is
in fact a 1-approximate tour (and thus also a $4/3$-approximate tour).  If ever such
a leaf exists, we can greedily take such tours until no more such leaves exist~\cite{asano}.

\subsection{Safe Operations}
We say that an operation that modifies the graph is \emph{safe} if it does not decrease
$OPT$ or change the cost of the lower bound $LB$ and it preserves feasibility.  Note that a $4/3$-approximate
tour set in the modified graph is therefore also a $4/3$-approximate tour set in the unmodified
graph.

The algorithm proceeds iteratively.  In each iteration, the algorithm performs a series
of \emph{safe} operations and takes a sequence of $4/3$-approximate tours.

We now define the operations.  Note that some of these operations also appear in~\cite{asano} with different vocabulary.

\begin{itemize}
\item {\bf Condense}:  If edge $e = (u,v)$ has traffic $f(e) = 1$ and $v$ is not a
leaf, add a vertex $v'$ and replace $e$ and $T_v$ with an edge $e'= (u,v')$ with $l
(e') = l(e) + \sum_{e''\in T_v}l(e'')$ and set $d(v') = d(T_v)$ (see Figure~\ref{fig:condense}).
 \item {\bf Unzip}: If edge $e = (u,v)$ has traffic equal to the sum of the traffic
 on child edges $(v,w_1),(v,w_2),...,(v,w_k)$, then delete $v$ and add edges $(u,w_1),
 (u,w_2),...,(u,w_k)$ with lengths $l((u,w_i))=l(e)+l(v,w_i)$ for all $i \in \{1,2,...,k\}$
 (see Figure~
\ref{fig:unzip}).  
 \item {\bf Group}:  If vertex $u$ has at least four children, including three leaf children $v_1,v_2,v_3$, such that $1.5<d(v_1)+d(v_2)+d(v_3)<2$, add vertex $u'$,
 edge $(u,u')$ of length zero, and for $i \in {1,2,3}$, replace $(u,v_i)$ with $(u',v_i)$ (see Figure~\ref{fig:group}).
\item {\bf Unite}: If vertex $u$ has leaf children $v_1$ and $v_2$ such that $d(v_1)
+ d(v_2) \leq 1$, delete $v_1$ and $v_2$ and add vertex $v_0$
with demand $d(v_1)+d(v_2)$
and edge $(u,v_0)$ with length $l((u,v_1))+l((u,v_2))$ (see Figure~\ref{fig:unite}).
\item {\bf Slide}: If edge $e_0 = (u,v)$ has child edges $ e_1 = (v,w_1)$ and $e_2
= (v,w_2)$ such that traffic $f(e_0) = f(e_1)$, then delete edge $e_2$ and add edge
$(w_1,w_2)$ of length $l(e_2)$ (see Figure~
\ref{fig:slide}).
\end{itemize}


\begin{figure}[!h]
\centering
    \begin{subfigure}[t]{0.3\textwidth}
      \includegraphics[width=\textwidth]{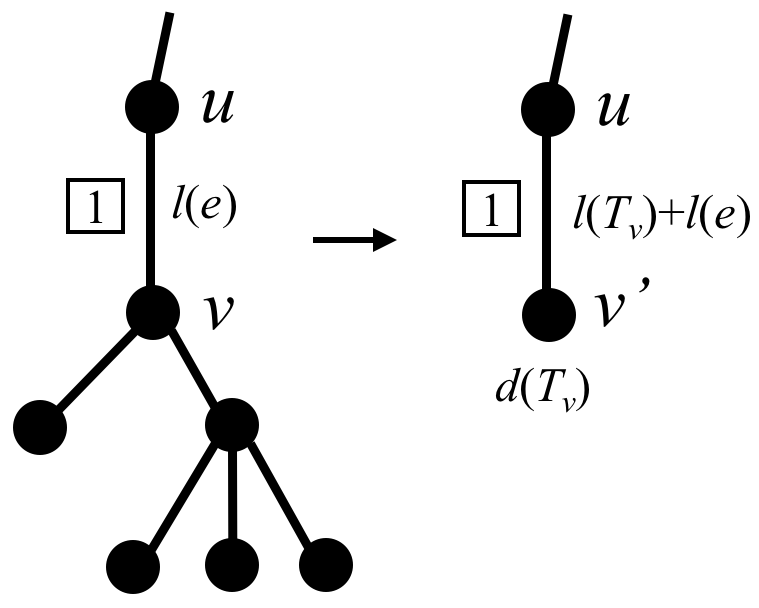}
      \caption{Condense}
      \label{fig:condense}
    \end{subfigure}
    \quad\vline\quad
    \begin{subfigure}[t]{0.35\textwidth}
      \includegraphics[width=\textwidth]{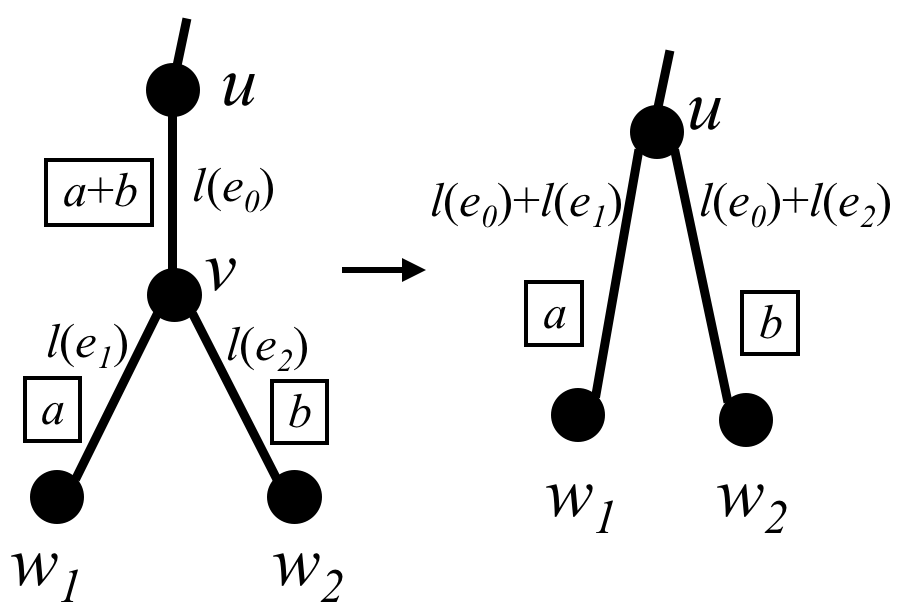}
      \caption{Unzip}
      \label{fig:unzip}
    \end{subfigure}

    \hrulefill

	\begin{subfigure}[t]{0.3\textwidth}
      \includegraphics[width=\textwidth]{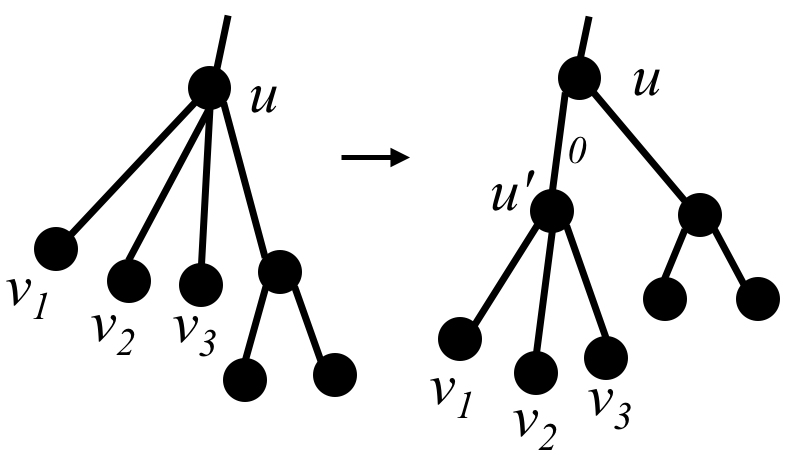}
      \caption{Group}
      \label{fig:group}
    \end{subfigure}
        \quad\vline\quad
    \begin{subfigure}[t]{0.3\textwidth}
      \includegraphics[width=\textwidth]{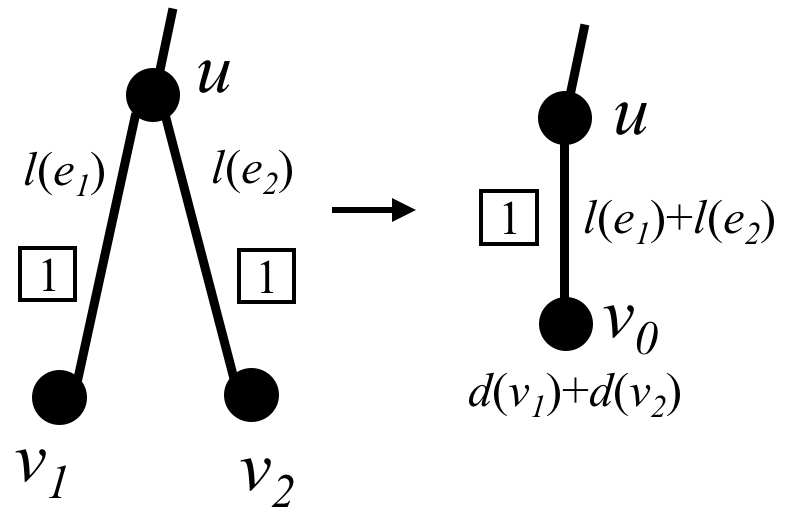}
      \caption{Unite}
      \label{fig:unite}
    \end{subfigure}
    \quad\vline\quad
    \begin{subfigure}[t]{0.25\textwidth}
      \includegraphics[width=\textwidth]{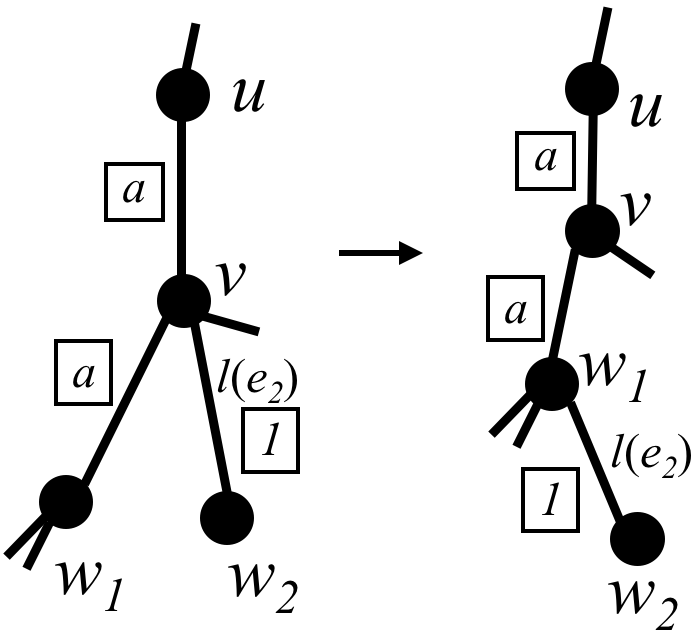}
      \caption{Slide}
      \label{fig:slide}
    \end{subfigure}
\caption{\label{fig:operations} Safe Operations.  Traffic values are shown in rectangles.}
\end{figure}

\section{Algorithm}
Exhaustively applying safe operations is called \emph{simplifying} the instance.  Note that none of the operations cancel each other, so this process terminates.

We say that a problem instance is \emph{simplified} if no more safe operations are
available, no internal vertices have demand, no non-root vertex has degree two, and for every leaf $v$, $0 < d(v) <1$.  We say that a branch is simplified if these conditions hold for the branch.

Recall that a \emph{branch} consists of a subtree along with its parent edge (\emph{stem}).
 If the branch has traffic $p$, we call it a $p$-\emph{branch}.

A simplified 2-branch with stem $e_0 = (u,v)$ such that $v$ has exactly three children $w_1,w_2,w_3$,
all of which are leafs and such that $1.5 < d(w_1)+d(w_2)+d(w_3) \leq 2$ is called a
$2$-\emph{chain}.

\begin{lemma}\label{lem:2_chain}
In a simplified problem instance, all 2-branches are 2-chains.
\end{lemma}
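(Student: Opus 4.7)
The plan is to take an arbitrary 2-branch with stem $e_0 = (u,v)$ in a simplified instance and show, using the non-applicability of each safe operation in turn, that $v$ must have exactly three leaf children whose demands sum to a value in $(1.5, 2]$. By definition of a 2-branch we already have $1 < d(T_v) \le 2$.

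First I would show that every child of $v$ is a leaf. Let $w$ be a child of $v$ and consider the traffic on edge $(v,w)$. Since $T_w \subseteq T_v$, $f((v,w)) \le f(e_0) = 2$. If $f((v,w)) = 2$, then Slide applies to any sibling edge at $v$, so $w$ would have to be the only child of $v$; but then $v$ would have degree two, contradicting simplification. Hence $f((v,w)) = 1$ for every child $w$, which in turn rules out $w$ being internal because Condense would otherwise apply. This proves all children of $v$ are leaves, each with demand strictly between $0$ and $1$.

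Next I would pin down the number $k$ of leaf children of $v$. The case $k=1$ again gives a degree-two vertex. For $k=2$, both child edges have traffic $1$, so $f(e_0) = 2 = f((v,w_1)) + f((v,w_2))$, which allows Unzip. For $k \ge 4$, I use that Unite is unavailable: for every pair $i \ne j$, $d(w_i) + d(w_j) > 1$. Summing this inequality over all $\binom{k}{2}$ pairs yields $(k-1)\sum_i d(w_i) > \binom{k}{2}$, i.e.\ $\sum_i d(w_i) > k/2$, so for $k \ge 4$ we would have $d(T_v) > 2$, contradicting $f(e_0) = 2$. The only possibility left is $k=3$.

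Finally, with $v$ having exactly three leaf children $w_1,w_2,w_3$, the upper bound $d(w_1)+d(w_2)+d(w_3) \le 2$ is immediate from $f(e_0) = 2$, and summing the three pairwise Unite inequalities $d(w_i)+d(w_j) > 1$ gives $2\sum_i d(w_i) > 3$, i.e.\ the sum exceeds $1.5$. I should also note that Group does not threaten this configuration, since Group requires $u$ (the vertex hosting the three leaves) to have at least four children, whereas here $v$ has only three. That matches the definition of a 2-chain exactly. The only mildly subtle step is the Slide/degree-two argument that forces $f((v,w)) = 1$ on every child edge; everything after that is a short counting argument against the Unite and Unzip triggers.
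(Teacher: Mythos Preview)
Your proposal is correct and follows essentially the same route as the paper: rule out a traffic-2 child edge via Slide (plus the no-degree-two condition), force all children of $v$ to be leaves via Condense, eliminate $k=2$ via Unzip, and use the pairwise Unite inequalities to force $k=3$ with total demand in $(1.5,2]$. Your treatment is slightly more explicit than the paper's (the counting argument for $k\ge 4$ and the remark that Group cannot fire because $v$ has only three children are spelled out, whereas the paper leaves these implicit), but there is no substantive difference in strategy.
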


\begin{proof}
Consider any 2-branch in a simplified problem instance.  Clearly no edge in the branch
can have traffic greater than
 two.  If more than one child edge of the stem had traffic two, then the traffic of
 the stem itself must be greater than two.  If the stem had exactly one child edge
 with traffic two, then a slide operation would have been possible. Therefore every
 child edge of the stem has traffic one. Each of these edges must be leaf edges or
 else they could be condensed.  If there were exactly two such
 edges, then
 the stem could be unzipped.  Since no two of these edges can be united, then the
 demand of every pair sums to more than one, so there are exactly three such edges
 and their demand sums to more than 1.5.
\end{proof}

\subsection{$p$-Chains}

We now generalize the notion of a 2-chain.  For $p \geq 3$ a $p$-\emph{chain} is a
simplified $p$-branch
with stem $e_0 = (u,v)$ such that $v$ has exactly three children $w_1,w_2,w_3$, in
which $w_2$ and $w_3$ are leaves with $1<d(w_2)+d(w_3)\leq1.5$ and $(v,w_1)$ is the stem
of a $(p-1)$-chain (see Figure~\ref{fig:chain_label}).

For convenience, we define a labeling scheme for $p$-chains.  Each vertex $v_i^j$
is doubly
indexed by level $i$ and rank (child-order) $j$.  We use $e_i^j$ to denote the parent
edge of $v_i^j$, $\forall i,j$. A $2$-chain with parent $u$
has stem $e_2^0=(u,v_2^0)$ leaves $v_1^0$, $v_1^1$, and $v_1^2$ such that $l(e_1^0) \geq
l(e_1^1) \geq l
(e_1^2)$.  For $p>2$, a $p$-chain with parent $u$
has stem $e_p^0=(u,v_p^0)$, and children $v_{p-1}^0$, $v_{p-1}^1$, and $v_{p-1}^2$, such
that $e_{p-1}^0$ is the stem of a $p-1$-chain, and $v_{p-1}^1$ and $v_{p-1}^2$ are
leaves with $l(e_{p-1}^1) \geq l(e_{p-1}^2)$ (See Figure~\ref{fig:chain_label}).

\begin{figure}[!h]\label{fig:p_chain}
\centering
    \begin{subfigure}[t]{0.4\textwidth}
      \includegraphics[width=\textwidth]{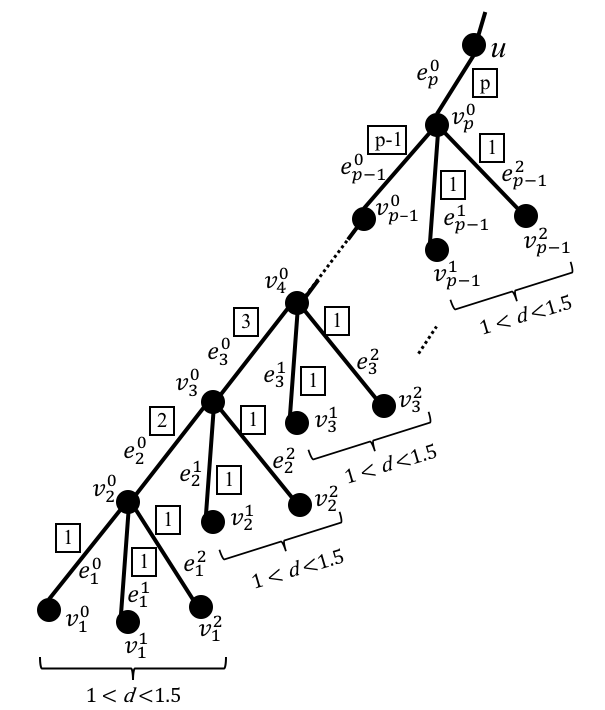}
      \caption{$p$-Chain.  Traffic values are shown in rectangles, and $d$ denotes demand.}
      \label{fig:chain_label}
    \end{subfigure}
    \quad\vline\quad
    \begin{subfigure}[t]{0.4\textwidth}
      \includegraphics[width=\textwidth]{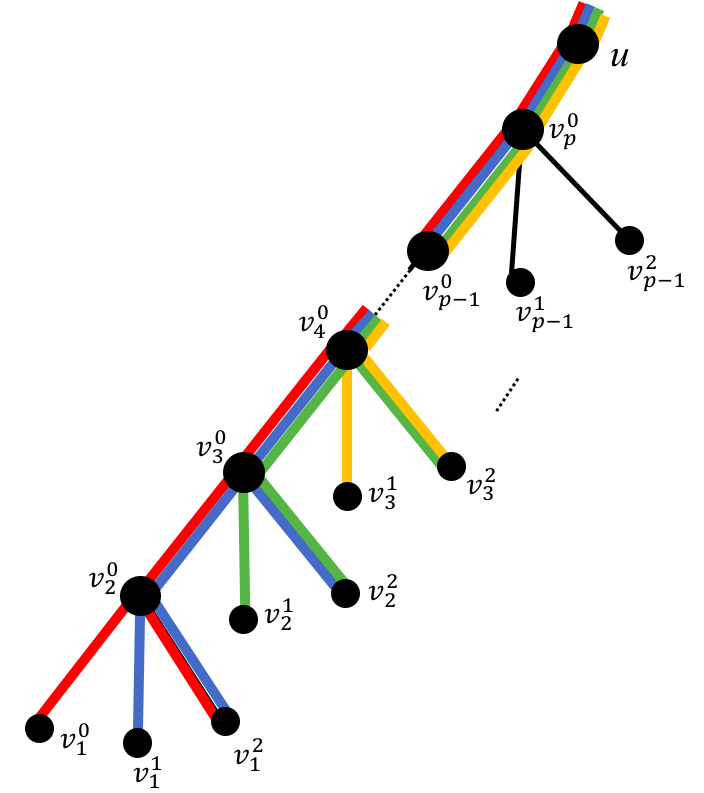}
      \caption{Cascade. The first four tours of a cascade tour set are depicted.}
      \label{fig:cascade}
    \end{subfigure}

\caption{\label{fig:p_chain}}    
\end{figure}

We further classify some  $p$-chains as \emph{long} $p$-chains:  All $2$-chains are long, and for $p \geq 3$ a \emph{long} $p$-\emph{chain} is a
$p$-chain in which $l(e_{p-1}^2) < l(P[v_{p}^0,r])$ and $e_{p-1}^0$
is the stem of a \emph{long} $(p-1)$-chain. A $p$-chain in which $l(e_{p-1}^2) \geq l(P[v_{p}^0,r])$ is called a \emph{short} $p$-chain.

Long $p$-chains are particularly convenient because they can be \emph{resolved} individually at the root and in sibling pairs for internal vertices, as described in the following lemmas (which we prove in Section~\ref{sec:resolving_pchains}).

\begin{lemma}\label{lem:root_pchain}
Long $p$-chains can be resolved at the root.
\end{lemma}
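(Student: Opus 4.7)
The plan is to exhibit an explicit \emph{cascade} tour set (the construction alluded to by Figure~\ref{fig:cascade}) consisting of exactly $p$ tours that together cover all demand in the long $p$-chain, and then verify that its total cost is at most $\tfrac{4}{3}$ times the chain's contribution to $LB$. Since the stem has traffic $p$, any feasible tour set must use at least $p$ tours, and since the total chain demand lies in $(p-1,p]$ (each level $i\geq 3$ contributing demand in $(1,1.5]$ and the bottom 2-chain contributing in $(1.5,2]$), exactly $p$ unit-capacity tours suffice.

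For the construction I would build the $p$ tours so that (i) each spine edge $e_i^0$ is traversed by exactly $i$ tours, matching its traffic, and (ii) the leaves at each level $i$ are covered by pairing a ``stopping'' tour that turns around at $v_i^0$ with one of the tours descending through $v_i^0$, splitting the demand of the shorter sibling $v_{i-1}^2$ whenever the pair demand exceeds unit capacity. The long property will dictate that the short leaf $v_{i-1}^2$ is the correct one to split and double-traverse, since its edge is the shortest at that level and is the only leaf edge whose length can be amortized against the spine above it.

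For the cost analysis, the spine contributions match $LB$ exactly by~(i), and the long leaf edges $e_{i-1}^1$ are each traversed by exactly one tour by~(ii), so those contributions also match $LB$. The only excess comes from short leaf edges $e_{i-1}^2$ that get traversed twice — once by the stopping tour at level $i$ and once by the descending tour that collects the residual. The total excess is therefore bounded by $2\sum_{i}l(e_{i-1}^2)$. The long property $l(e_{i-1}^2)<l(P[v_i^0,r])$, combined with the hypothesis $u=r$, gives $l(e_{i-1}^2)<\sum_{j=i}^{p}l(e_j^0)$, which is exactly the inequality needed to amortize each excess term against the spine's $LB$ contribution $\sum_{j}2j\,l(e_j^0)$.

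The main obstacle will be tight control of the $\tfrac{4}{3}$ constant. Since Lemma~\ref{lem:tight} witnesses tightness of this ratio, the accounting has no slack, and a naive amortization against the spine alone appears insufficient: one must additionally exploit the $LB$ contribution of the long leaf edges $e_{i-1}^1$ together with the level-demand lower bounds ($A_i>1$ for $i\geq 3$ and $A_2>1.5$), which translate to a useful lower bound on the total chain $LB$. I anticipate that the formal proof will proceed by induction on $p$, with the 2-chain at the root as the base case (resolvable directly in essentially the manner of Lemma~\ref{lem:2_chain}) and the inductive step handling one additional level's stopping tour and its hand-off to the descending tours, invoking the inductive hypothesis on the $(p-1)$-chain below — which inherits the long property by definition of a long $p$-chain.
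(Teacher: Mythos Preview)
Your plan is essentially the paper's own proof: the paper uses exactly the cascade tour set you describe (Section~\ref{sec:cascades}), obtains the same edge-traversal counts ($i$ on $e_i^0$, one on $e_i^1$, two on $e_i^2$), and bounds the ratio by charging each excess $l(e_i^2)$ against the three terms $l(e_i^1)\geq l(e_i^2)$, $l(e_i^2)$ itself, and $l(P[v_{i+1}^0,r])>l(e_i^2)$ (the long property), with the bottom level handled by $l(e_1^0)\geq l(e_1^1)\geq l(e_1^2)$.

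Two small deviations are worth flagging. First, the paper does \emph{not} induct on $p$; it carries out a single direct regrouping of the sum $\sum_i i\,l(e_i^0)$ to extract one copy of $l(P[v_{i+1}^0,r])$ for each $i\geq 2$, which yields the $4/3$ bound in one stroke. Your inductive plan would work too, but the direct argument is shorter and avoids having to formulate a strengthened induction hypothesis. Second, the level-demand lower bounds you mention ($A_i>1$, $A_2>1.5$) play no role in the cost-ratio analysis here; they are used only to certify that each cascade tour is indeed full (so the traffic reductions are as claimed), not to squeeze out the $4/3$ constant. The ratio bound is purely a length inequality once the traversal counts are established.
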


\begin{lemma}\label{lem:sibling_pchain}
A long $p$-chain and long $p'$-chain can be resolved together if they are sibling
branches. 
\end{lemma}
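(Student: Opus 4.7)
The plan is to build a \emph{combined cascade} of $p + p'$ tours that resolves both sibling chains, and then to verify that the ratio of tour cost to LB reduction is at most $4/3$ by splitting the accounting along edges: inside each chain separately, and on the shared path $P[r,u]$.

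The cascade for each chain is the one used in the proof of Lemma~\ref{lem:root_pchain}: for the $p$-chain, $p$ tours indexed by level $i = 1, \ldots, p$, where tour $i$ enters through the stem $e_p^0$, descends along the spine $e_{p-1}^0, \ldots, e_i^0$ to $v_i^0$, and picks up the leaves there (two leaves at level $i \ge 2$, three leaves at level $1$). Because each level's demand exceeds one, I would split one short-edge leaf per level across two consecutive tours so that every tour carries at most one unit of demand. Under this design, internal stem edge $e_j^0$ is traversed by exactly $j$ tours, matching its LB traffic $f(e_j^0) = j$. The per-chain accounting from Lemma~\ref{lem:root_pchain} then bounds the cascade's tour cost on edges inside the chain by $4/3$ times the chain's internal LB. That inductive argument uses the long inequality $l(e_{i-1}^2) < l(P[v_i^0, r])$ at each level to amortize the extra split cost against the new stem's LB contribution, and it applies verbatim when the chain is a sibling rather than at the root, since the inequality already refers to the path all the way to $r$.

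The remaining accounting is on the shared path $P[r, u]$. Each of the $p + p'$ tours traverses every edge $e$ on $P[r, u]$ twice, so the tour cost contributed by $e$ is $2\,l(e)(p + p')$. For the LB drop on $e$, I would use the following demand count: the bottom 2-chain of a long $p$-chain has leaf-demand sum strictly greater than $1.5$ and each higher level has sum strictly greater than $1$, so the total demand of a $p$-chain is strictly greater than $p - \tfrac{1}{2}$; hence $d_A + d_B > p + p' - 1$. An elementary ceiling calculation then gives $\lceil D + d_A + d_B \rceil - \lceil D \rceil \ge p + p' - 1$ for any external demand $D$ above $u$, so the LB drop on $e$ is at least $2\,l(e)(p + p' - 1)$ and the tour-to-LB ratio on $e$ is at most $(p+p')/(p+p'-1) \le 4/3$ whenever $p + p' \ge 4$, which always holds since $p, p' \ge 2$.

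Combining the per-edge ratios, all at most $4/3$, yields the overall $4/3$ bound for the combined cascade and the lemma. The main technical obstacle is already encapsulated in Lemma~\ref{lem:root_pchain}: designing the per-chain cascade so that a single split per level suffices, and proving inductively that the long inequalities exactly absorb the extra split cost. Once that is in hand, the sibling case reduces to the clean demand count $d_A + d_B > p + p' - 1$ combined with the inequality $(p+p')/(p+p'-1) \le 4/3$ for $p + p' \ge 4$.
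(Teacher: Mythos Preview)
Your overall strategy—running the cascade on each chain and then splitting the bookkeeping into ``internal to $B$'', ``internal to $B'$'', and ``shared path $P[u,r]$''—matches the paper in spirit, but the clean separation you assert does not actually hold. The claim that the cascade cost on edges \emph{inside} a chain is at most $4/3$ of that chain's \emph{internal} LB is false in general. The long inequality $l(e_i^2) < l(P[v_{i+1}^0,r])$ bounds $l(e_i^2)$ against the entire path to the depot, and when the chain hangs off a vertex $u\neq r$ that path includes $l(P[u,r])$, which lies \emph{outside} the chain. So Lemma~\ref{lem:root_pchain} does not ``apply verbatim'' to the internal edges alone. Concretely: take all spine edges $e_j^0$ ($j\ge 2$) of length $0$, all leaf edges (including $e_1^0$) of length $M$, and $l(P[u,r])$ large. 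The long conditions are satisfied, yet the internal cost-to-LB ratio is $(3p-2)/(2p-1)$, which exceeds $4/3$ for every $p\ge 3$.

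What the paper does instead is keep the two chains and the shared path in a single ratio. In the denominator it rewrites the spine contributions so that each level $i\ge 2$ contributes $l(P[v_{i+1}^0,r])+l(e_i^1)+l(e_i^2)$, and this term \emph{already draws on} a copy of $l(P[u,r])$. Summing, the two chains together consume $(p-2)+(p'-2)$ copies of $l(P[u,r])$, and the remaining $3$ copies (out of the $p+p'-1$ available from the LB drop on $P[u,r]$) pay for the one extra traversal of $P[u,r]$. Your observation that $d_A+d_B>p+p'-1$ and hence the LB on $P[u,r]$ drops by at least $p+p'-1$ is correct and is exactly what the paper uses—but that budget must be shared with the per-chain charges, not reserved solely for the $(p+p')/(p+p'-1)$ ratio on $P[u,r]$. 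Fixing your argument amounts to doing precisely the combined accounting the paper carries out.
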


As in~\cite{asano}, our algorithm proceeds in a series of iterations.  Each iteration
performs a set of safe operations and identifies a $4/3$-approximate
tour set.

\begin{lemma}\label{lem:iteration}
Iteration $i$ runs in polynomial time and either finds a nonempty $4/3$-approximate
tour set or finds that every branch at the root is either a long $p$-chains or 1-branch.
\end{lemma}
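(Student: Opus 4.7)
The plan is to design one iteration so that it first exhaustively applies safe operations to simplify the instance, then searches bottom-up for a substructure from which a nonempty $4/3$-approximate tour set can be extracted, and, if no such substructure is found, certifies that every remaining branch at the root is a long $p$-chain or a $1$-branch. The safe operations do not cancel one another and each strictly decreases a simple combinatorial measure of the tree, so simplification terminates in polynomially many steps, giving the polynomial-time part of the statement.

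After simplification I would locate a leafmost branch $B$ of traffic at least two; by Lemma~\ref{lem:2_chain} it must be a 2-chain. I would then walk rootward, maintaining the invariant that the current branch is a $p$-chain for some $p$. At each step I inspect the parent branch $B'$ and split into three cases: (i) $B'$ itself fits the $p$-chain template, so its head has exactly three children of which two are leaves with combined demand in $(1,1.5]$ and the third is the head of the current chain; (ii) $B'$ has among its children another chain, so the current chain and that sibling form a pair of long $p$-chains; (iii) $B'$'s local structure fails both templates (extra children, wrong demand profile, different traffic split, or similar). In case (ii) I invoke Lemma~\ref{lem:sibling_pchain} to output a $4/3$-approximate tour set from the sibling pair. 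In case (iii) I appeal to a dedicated case-specific strategy that produces a $4/3$-approximate tour set. In case (i) I further check whether the new $(p{+}1)$-chain is short, i.e.\ $l(e_p^2) \geq l(P[v_{p+1}^0, r])$; if so, the leaf edges at the top of the chain are long relative to the path to the root, and a tour set that covers the new leaf demand without descending repeatedly through the entire path has the desired $4/3$ ratio. If the new chain is long I simply continue walking rootward. If none of these cases ever triggers, the walk reaches the root; by the invariant every root-branch examined along the way is a long $p$-chain, and all other root-branches have traffic one, which is exactly the structural conclusion.

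The main obstacle I expect is case (iii): I need to enumerate every allowable local configuration of a simplified internal vertex whose only non-$1$-trafficked child is a long $p$-chain and, in each, exhibit a short explicit set of tours with cost-to-savings ratio at most $4/3$. The $p$-chain definition is engineered precisely to absorb the awkward sub-case that prevented Asano et al.\ from achieving $4/3$, so the reachable non-chain configurations should be few and each should admit a direct accounting using the fact that whatever extra structure is present (additional children, imbalanced demand) comes with lower-bound savings large enough to pay for the tour. Once this case analysis is in place, polynomial runtime is immediate: the rootward walk has length at most the depth of the tree, and every case-dependent tour construction is a bounded explicit recipe.
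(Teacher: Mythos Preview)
Your outline matches the paper's architecture (simplify, then locate a local obstruction, then extract a $4/3$-approximate tour set), but there are two concrete gaps.

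First, walking rootward from \emph{one} leafmost $2$-branch does not guarantee that, when you reach the parent branch $B'$, the \emph{other} children of $B'$ are already tame. A sibling $C$ of your current chain may itself have traffic $\geq 3$ and fail to be a chain; your case~(iii) would then have to resolve $B'$ with an arbitrary unsettled child hanging off it, which is the whole problem over again. The paper sidesteps this by working with a \emph{minimally unsettled} branch---an unsettled branch all of whose children are already settled (i.e.\ $1$-branches or long chains). Such a branch always exists if any root-branch is unsettled, and locating one makes the sibling structure at $B$ completely controlled. Relatedly, your terminal claim ``all other root-branches have traffic one'' is not justified: your walk certifies only the single root-branch it climbed through.

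Second, your case~(iii) is the heart of the lemma and you leave it open. Once one assumes minimality, the case analysis collapses: if $B$ has at most one long chain among its children and $B$ is not itself a (short) chain, then in a simplified instance $B$ necessarily has at least three leaf children, and because \textsc{Unite} and \textsc{Group} are unavailable and $B$ is not a $2$-chain, those three leaves have total demand strictly between $2$ and $3$. The paper then gives a direct two-subcase strategy: if $l(P[v_0,r])\le w_1+w_2+w_3$ take three singleton tours, otherwise take one tour through the nearest and farthest leaf; in each subcase a one-line inequality yields ratio $\le 4/3$. This ``trident'' argument is the missing ingredient, and it is short once the right branch has been isolated.

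In short: replace the single-path rootward walk by a search for a minimally unsettled branch, and supply the explicit three-leaf tour construction for your case~(iii). With those two fixes your plan coincides with the paper's proof.
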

\begin{proof}
See Section~\ref{sec:iteration}.
\end{proof}

These iterations continue until every branch at the root is either a long $p$-chains or 1-branch. The algorithm then solves the remaining instance, using
the result from Lemma~\ref{lem:root}.

\begin{lemma}\label{lem:root}
There is a polynomial time $4/3$-approximation algorithm for instances of {\sc Capacitated Vehicle Routing} on
trees in which every child branch of the root is either a long $p$-chain or a 1-branch.
\end{lemma}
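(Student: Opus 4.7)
The plan is to handle the three types of structure separately and combine their tour sets. Since every root child is either a long $p$-chain or a 1-branch, it suffices to produce a $4/3$-approximate tour set covering the demand of each individual root branch (or sibling pair of root branches), because if each tour set has cost at most $4/3$ times the lower-bound reduction it induces, then summing over a partition of the root branches gives total cost at most $4/3 \cdot LB \le 4/3 \cdot OPT$, and by Lemma~\ref{lem:tours} the union is a $4/3$-approximation.

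First I would dispense with the 1-branches. A 1-branch has stem traffic $1$, which (since $Q=1$ after scaling) means total demand in the branch is at most $1$. A single tour that performs a depth-first traversal of the branch covers all of its demand, traverses every edge of the branch exactly twice, and therefore has cost exactly equal to the $LB$ contribution of the branch. This is a $1$-approximate (hence $4/3$-approximate) tour set that resolves the branch, and it can clearly be constructed in polynomial time.

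Next I would handle the long $p$-chains at the root. Any two children of the root are sibling branches of each other, so Lemma~\ref{lem:sibling_pchain} applies to any two long $p$-chains rooted at $r$. I would arbitrarily pair up the long $p$-chains among the root's children into as many sibling pairs as possible; each pair can be resolved together by a $4/3$-approximate tour set via Lemma~\ref{lem:sibling_pchain}. If the number of long $p$-chains at the root is odd, one unpaired long $p$-chain remains, and I resolve it using Lemma~\ref{lem:root_pchain}, which gives a $4/3$-approximate tour set that resolves it individually at the root. Since there are only polynomially many root branches and each invocation of the two lemmas is polynomial time, the entire procedure runs in polynomial time.

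The main thing to verify carefully is that these sub-tour-sets genuinely compose. For each 1-branch, each paired sibling chain set, and any leftover single chain, the associated tour set uses only edges inside those branches together with the path from the root (which is trivial for root children), so the tour sets are disjoint in the edges whose traffic they consume and independent in which demand they cover. Thus the lower-bound reductions of the individual tour sets sum to exactly $LB$, and the total cost is at most $4/3 \cdot LB \le 4/3 \cdot OPT$. The only subtle point, which I would just cite rather than re-prove, is that Lemmas~\ref{lem:root_pchain} and \ref{lem:sibling_pchain} indeed produce $4/3$-approximate tour sets; given those, the argument above assembles them into a global $4/3$-approximation.
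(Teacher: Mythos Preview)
Your proof is correct and follows essentially the same approach as the paper: handle each 1-branch at the root with a single 1-approximate tour, resolve the long $p$-chains via the resolution lemmas, and combine using Lemma~\ref{lem:tours}. The only difference is that the paper skips your pairing step entirely and simply applies Lemma~\ref{lem:root_pchain} to each long $p$-chain at the root individually; invoking Lemma~\ref{lem:sibling_pchain} for root siblings is harmless (since $l(P[r,r])=0$) but unnecessary.
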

\begin{proof}
The cost of a tour that traverses a 1-branch at the root is equivalent to the reduction to lower bound $LB$ that results from removing the demand of the branch, so such a tour is a 1-approximate tour (and thus also a $4/3$-approximate tour). The lemma result then follows from Lemma~\ref{lem:tours} and Lemma~\ref{lem:root_pchain}
\end{proof}

Putting these steps together, gives our overall 4/3-approximation result described
in Theorem~\ref{thm:main}.

\setcounter{theorem}{0}
\begin{theorem}\label{thm:main} There is a polynomial-time 4/3 approximation for 
{\sc Capacitated Vehicle
Routing} in trees.
\end{theorem}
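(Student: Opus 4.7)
The plan is to assemble the overall 4/3-approximation by stringing together the preceding lemmas. The algorithm repeatedly invokes the iteration procedure guaranteed by Lemma~\ref{lem:iteration} on the current instance. Each call either returns a nonempty 4/3-approximate tour set---in which case the algorithm appends it to the accumulated output, removes the demand those tours cover (along with any branches whose demand has dropped to zero), and continues the loop---or else certifies that every remaining branch at the root is a long $p$-chain or a 1-branch. When the latter occurs, the algorithm exits the loop and invokes Lemma~\ref{lem:root} on the residual instance, appending its output as a final tour set.

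Correctness will follow directly from Lemma~\ref{lem:tours}. Every set appended to the accumulator is a 4/3-approximate tour set in the instance to which it was applied: the sets produced by Lemma~\ref{lem:iteration} are so by definition, and the feasible solution produced by Lemma~\ref{lem:root} has cost at most $\tfrac{4}{3}\cdot LB$ on the residual instance, which in particular makes it a 4/3-approximate tour set there. Hence the union of all accumulated tour sets is a feasible solution for the original instance of total cost at most $\tfrac{4}{3}\cdot LB\le \tfrac{4}{3}\cdot OPT$.

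For the running time, an individual call to either Lemma~\ref{lem:iteration} or Lemma~\ref{lem:root} is polynomial by hypothesis, so it suffices to bound the number of main-loop iterations. The key observation is that every iteration which returns a nonempty tour set strictly reduces the instance size: at least one branch is resolved and deleted from the tree, so the number of iterations is bounded by the number of branches in the input. The main obstacle in the overall argument is therefore not this final assembly but the previous establishment of Lemma~\ref{lem:iteration}, which must guarantee that whenever the instance does not yet consist only of long $p$-chains and 1-branches at the root, a nonempty 4/3-approximate tour set can in fact be produced in polynomial time; once that lemma is in hand, the proof of Theorem~\ref{thm:main} reduces to the assembly above.
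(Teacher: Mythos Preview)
Your assembly is the paper's: loop on Lemma~\ref{lem:iteration}, then apply Lemma~\ref{lem:root}, and combine via Lemma~\ref{lem:tours}. The one imprecision is your iteration bound: you claim each productive iteration resolves and deletes a branch and hence the loop is bounded by the number of branches in the input, but the Group operation performed during simplification can \emph{add} a branch, so the branch count is not obviously monotone across iterations. The paper sidesteps this by bounding the loop by the total demand $m$, since every nonempty tour set produced in cases (i)--(iii) removes at least one (scaled) unit of demand and no safe operation creates demand; alternatively, tracking the number of \emph{leaves} rather than all branches would repair your argument, because each case fully covers at least one leaf and no safe operation increases the leaf count.
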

\begin{proof}
Let $m$ denote the total amount of demand in the graph. Since each iteration removes demand, there are at most $m$ iterations, each of which runs in polynomial time.  By Lemma~\ref{lem:iteration},
after iteration $m$, every branch off the root must be a long $p$-chain or
a 1-branch.  By Lemma~\ref{lem:root} there is a polynomial-time $4/3$-approximation for these branches.
Combining this approximation with the collection of tours identified during the iterations
results in an overall $4/3$ approximation, by \ref{lem:tours}.
\end{proof}

All that remains is to prove the above lemmas.  In Section~\ref{sec:iteration} we describe the iteration subroutine and prove Lemma~\ref{lem:iteration}.  In Section~\ref{sec:resolving_pchains} we prove Lemmas~\ref{lem:root_pchain} and \ref{lem:sibling_pchain}.

\section{Description of Iteration $i$}\label{sec:iteration}

We say that a $p$-branch $B$ is \emph{settled} if it is either a 1-branch or a long $p$-chain.  Otherwise we say that $B$ is \emph{unsettled}.  We say that a $p$-branch $B$ is \emph{minimally unsettled} if it is unsettled and all of its child branches are settled.

Each iteration consists of the following subroutine:
\begin{enumerate}
\setlength{\itemsep}{0pt}
\item Simplify the instance (i.e. exhaustively apply safe operations)
\item If all branches at the root are settled, terminate iteration.
\item Otherwise, find a minimally unsettled branch $B$.
\begin{enumerate}[label = case (\roman*)]
\setlength{\itemsep}{0pt}
\item If $B$ has at least two child branches that are long $p$-chains, apply Lemma~\ref{lem:sibling_pchain}.
\item Otherwise, if $B$ has at least three child branches that are $1$-branches, apply Lemma~\ref{lem:trident}
\item Otherwise, $B$ is a (short) $p$-chain.  Apply Lemma~\ref{lem:short_pchain}.
\end{enumerate}
\end{enumerate}

\begin{lemma}\label{lem:trident} A simplified, minimally unsettled branch $B$ with at least three child branches that are 1-branches admits a $4/3$-approximate tour set.
\end{lemma}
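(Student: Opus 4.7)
I would prove the lemma by a structural case analysis on the simplified, minimally unsettled branch $B$. Let $e_0 = (u,v)$ be the stem of $B$, $X = l(P[v,r])$, and let $v_1, \ldots, v_k$ (with $k \geq 3$) be the leaf children of $v$, with edge lengths $\ell_1, \ldots, \ell_k$ and demands $d_1, \ldots, d_k$; write $D_L = \sum d_i$ and $\Sigma = \sum \ell_i$. Non-applicability of Condense forces each simplified $1$-branch to be a single leaf edge with $d_i \in (0,1)$; non-applicability of Unite forces $d_i + d_j > 1$ for all pairs; non-applicability of Group (when $k \geq 4$) forces every three leaf demands to sum to at least $2$. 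The case~(ii) hypothesis and minimality of $B$ permit $v$ to have at most one further child, which must be a long $p'$-chain of total demand $c \in (p'-1, p']$ (set $c = 0$ if no such chain exists). Non-applicability of Slide and Unzip, together with Lemma~\ref{lem:2_chain}, force the traffic $p$ of $e_0$ to be at least $3$.

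Set $\delta = \lceil d(T_B)\rceil - \lceil d(T_B) - D_L \rceil$, the traffic reduction on $e_0$ (and hence on each ancestor) caused by fully removing the leaves' demand. The tour-set construction hinges on whether the naive ``$k$ single-leaf tours'' strategy already suffices: if $3k \leq 4\delta$, those tours, of total cost $2kX + 2\Sigma$ against reduction $2\delta X + 2\Sigma$, give ratio at most $4/3$. If instead $3k > 4\delta$, I would take $\lceil D_L \rceil$ capacity-$1$ tours that splittably cover the $D_L$ units of leaf demand. Because every leaf has $d_i > 1/2$ (and $d_i \geq 2/3$ for the $k-2$ largest ones when $k \geq 4$), each capacity-$1$ tour fits at most one ``whole'' leaf together with a fractional piece of at most one other, and a single leaf of demand $\geq 2/3$ can serve as filler for multiple consecutive tours. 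Designating the leaves with the smallest $\ell_i$ as fillers yields total extra length $\Delta \leq \Sigma/3$ for doubly-visited leaves; combined with $\delta = \lceil D_L\rceil$ (which holds whenever no chain child is present), the ratio $1 + \Delta/(\lceil D_L\rceil X + \Sigma)$ is at most $4/3$.

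The main obstacle is the sub-case $k = 3$ with a chain child. Here non-applicability of Unzip pins $p = p' + 2$ and hence $\delta = 2$, while $D_L \in [2, 3)$ gives $\lceil D_L \rceil \in \{2, 3\}$. When $D_L = 2$, two capacity-$1$ tours with the shortest-edge leaf as the unique filler give $\Delta \leq \Sigma/3$ and thus ratio $\leq 4/3$. When $D_L \in (2, 3)$ the inequality $3k > 4\delta$ becomes $9 > 8$ so neither naive strategy works, leaving ratio as large as $3/2$. In that sub-case the proof must add a carefully chosen tour that reaches into the chain, exploiting that the chain is \emph{long}---so its shortest leaf has edge length strictly less than the distance from the chain's top vertex to $r$---and then use the tight constraint $c \leq p' + 2 - D_L$ implied by $p = p' + 2$ to absorb the extra cost in the ratio inequality. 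Carrying out this algebraic verification in the $D_L \in (2,3)$ chain-present sub-case is the main technical hurdle of the proof.
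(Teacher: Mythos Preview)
Your proposal has a genuine gap in what you call the ``main technical hurdle'': the $k=3$, $D_L\in(2,3)$ sub-case with a long-path stem. You considered only tour sets that cover \emph{all} of the leaf demand $D_L$ (either $k$ single-leaf tours or $\lceil D_L\rceil$ capacity-$1$ tours), found neither works when $X>\Sigma$, and then proposed reaching into the chain child. But a $4/3$-approximate tour set need not remove all leaf demand; it merely needs cost at most $4/3$ times its own reduction to $LB$. The paper exploits exactly this: when $X>\Sigma$ it takes a \emph{single} tour that visits the leaf with the longest edge, covers it fully, then fills the remaining capacity at the leaf with the shortest edge. Since Unite is unavailable the vehicle ends full, so the tour removes exactly one unit of demand and resolves one leaf edge, giving cost $2(X+\ell_{\max}+\ell_{\min})$ against reduction at least $2(X+\ell_{\max})$; the ratio $1+\ell_{\min}/(X+\ell_{\max})$ is at most $4/3$ because $X>\ell_1+\ell_2+\ell_3\geq 3\ell_{\min}$.

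More broadly, the paper's proof is far simpler than your setup: it never looks at $k$, never distinguishes whether a chain child is present, and never touches the chain. It picks any three of the leaf children, uses unavailability of Unite, Group, and the fact that $B$ is not a $2$-chain to get $d(v_1)+d(v_2)+d(v_3)\in(2,3)$, and then splits into just two cases on whether $X\le \ell_1+\ell_2+\ell_3$ (three single-leaf tours, your strategy~1) or $X>\ell_1+\ell_2+\ell_3$ (the single two-leaf tour above). Your elaborate case analysis on $k$, $\delta$, and the chain child is unnecessary, and the ``reach into the chain'' argument you identify as the crux is simply not needed.
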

\begin{proof}
Let $(u,v_0)$ be the stem of $B$, and let $(v_0,v_1)$, $(v_0,v_2)$, and $(v_0,v_3)$ be three (child) 1-branches.  Since the branch is simplified, $v_1$, $v_2$, and $v_3$ are leaves.  Since the unite operation is unavailable, $d(v_1)+d(v_2)+d(v_3)>1.5$, and since $B$ is unsettled, it is not a 2-chain.  Furthermore, the group operation is unavailable, so $2 < d(v_1)+d(v_2)+d(v_3) < 3$.  Let $a = l(P[v_0,r])$, $w_1 = l(v_0,v_1)$, $w_2 = l(v_0,v_2)$, and $w_3 = l(v_0,v_3)$.  Without loss of generality, assume $w_1 \leq w_2 \leq w_3$.

If $a \leq w_1 + w_2 + w_3$, then for $i \in \{1,2,3\}$, let $t_i$ be the tour that travels from the depot to $v_i$, covers all demand at $v_i$, and then returns to the depot.  The length of $t_i$ is $2(a+w_i)$.  The total cost of the tour set $\{t_1,t_2,t_3\}$ is $2(3a + w_1 + w_2 + w_3)$.  This tour set covers all demand of the leaves and also reduces demand along $P[v_0,r]$ by two, since $2 < d(v_1)+d(v_2)+d(v_3) < 3$, so the reduction to $LB$ is $2(2a + w_1 + w_2 + w_3)$.  The ratio of the cost of the tour set to the reduction to the lower bound is therefore,
$$\frac{2(3a + w_1 + w_2 + w_3)}{2(2a + w_1 + w_2 + w_3)} = 1+ \frac{a}{2a + w_1 + w_2 + w_3} \leq \frac{4}{3}$$
where the final equality comes from $a \leq w_1 + w_2 + w_3$.

Otherwise, $a > w_1 + w_2 + w_3$.  Let $t$ be the tour that travels from the depot to $v_1$, covers all demand at $v_1$, travels to $v_3$, covers as much demand as possible at $v_3$, and then returns to the depot.  The cost of $t$ is $2(a+w_1+w_3)$.  Note that since the unite operation is unavailable, then $d(v_1)+d(v_3) > 1$, so the vehicle is full, and some demand remains at $v_3$.  Since the vehicle is full, then $t$ reduces demand along $P[v_0,r]$ by one, so the reduction to $LB$ is $2(w_1 + a)$.  The ratio of the cost of the tour set $\{t\}$ to the reduction to the lower bound is therefore, 

$$\frac{2(a+w_1+w_3)}{2(w_1 + a)} = 1 + \frac{w_3}{w_1+a} \leq \frac{4}{3}$$

where the final inequality comes from $a > w_1 + w_2 + w_3 \geq w_1 \geq w_3$.

\end{proof}

\begin{lemma}\label{lem:short_pchain} A short $p$-chain admits a $4/3$-approximate tour set.
\end{lemma}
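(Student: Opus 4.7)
My plan is to construct a $4/3$-approximate tour set for a short $p$-chain $B$ using only the two top-level leaves $v_{p-1}^1$ and $v_{p-1}^2$; the inner $(p-1)$-chain need not be touched at all. Set $\alpha = l(P[v_p^0,r])$, $\ell_1 = l(e_{p-1}^1)$, and $\ell_2 = l(e_{p-1}^2)$. By the $p$-chain definition, $\ell_1 \geq \ell_2$ and $1 < d(v_{p-1}^1) + d(v_{p-1}^2) \leq 1.5$, and by the short hypothesis, $\ell_2 \geq \alpha$.

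For $j \in \{1,2\}$, let $t_j$ be the tour that goes from the depot to $v_{p-1}^j$, covers all of the demand $d(v_{p-1}^j)$ (feasible since $d(v_{p-1}^j)<1$), and returns. The cost of $t_j$ is $2(\alpha + \ell_j)$, so the tour set $\{t_1,t_2\}$ has total cost $4\alpha + 2\ell_1 + 2\ell_2$. For the reduction to $LB$: each leaf edge $e_{p-1}^j$ has traffic $1$ before the tours and $0$ after, saving $2\ell_j$. Moreover, because the demand removed from $T_{v_p^0}$ strictly exceeds $1$ while the original subtree demand lies in $(p-1,p]$, the post-tour subtree demand is strictly below $p-1$, so the traffic on the stem $e_p^0$ and on every edge of $P[v_p^0,r]$ drops by at least $1$. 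This contributes an additional $2\alpha$, giving a total reduction of at least $2\alpha + 2\ell_1 + 2\ell_2$.

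The resulting cost-to-reduction ratio is at most
$$\frac{4\alpha + 2\ell_1 + 2\ell_2}{2\alpha + 2\ell_1 + 2\ell_2} \;=\; 1 + \frac{\alpha}{\alpha + \ell_1 + \ell_2}.$$
The short condition is tailor-made for this bound: $\ell_2 \geq \alpha$ together with $\ell_1 \geq \ell_2$ gives $\ell_1 + \ell_2 \geq 2\alpha$, so the ratio is at most $1 + 1/3 = 4/3$, as required. I do not anticipate a real obstacle here — the only point that needs any care is verifying that the stem traffic drops by at least $1$, which is immediate from the fact that more than one unit of demand is removed from a subtree whose demand lies in $(p-1,p]$. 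The fact that the inner long $(p-1)$-chain plays no role in the argument is consistent with the paper's overall strategy of pushing difficult cases rootward: after these two tours, $v_p^0$ becomes a degree-two vertex and the chain is simplified in subsequent iterations.
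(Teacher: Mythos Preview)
Your proof is correct and essentially identical to the paper's own argument: both use the two tours that visit $v_{p-1}^1$ and $v_{p-1}^2$ individually, compute the same cost $2(2\alpha+\ell_1+\ell_2)$ and the same lower-bound reduction $2(\alpha+\ell_1+\ell_2)$, and finish with the inequality $\alpha\leq\ell_2\leq\ell_1$ derived from the short-chain condition. Your justification of the traffic drop along $P[v_p^0,r]$ is slightly more explicit than the paper's, but the content is the same.
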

\begin{proof}
Let $a = l(P[v_{p}^0,r])$, $b = l(e_{p-1}^2)$, and $c=l(e_{p-1}^1)$.  By construction, $c \geq b$, and since the $p$-chain is short, $b \geq a$.  Let $t_1$ be the tour that goes from the depot to $v_{p-1}^1$, covers all demand at this leaf, and then returns to the depot.  Similarly, let $t_2$ be the tour that goes from the depot to $v_{p-1}^2$, covers all demand at this leaf, and then returns to the depot. Tour $t_1$ has cost $2(a+c)$ and $t_2$ has cost $2(a+b)$.  Tour set $\{t_1,t_2\}$ has total cost $2(2a + b + c)$.  This tour set covers all demand at these leaves, which sum to greater than one by definition of $p$-chain, so these tours reduce the demand along $P[v_{p}^0,r]$ by one.  Therefore the reduction to $LB$ from this tour set is $2(a+b+c)$.  Therefore the ratio of the cost of the tour set to the reduction to $LB$ is,
$$\frac{2(2a + b + c)}{2(a + b + c)} = 1 + \frac{a}{a+b+c} \leq \frac{4}{3}$$

The final inequality comes from $a\leq b\leq c$.

A $p$-chain in which $l(e_{p-1}^2) \geq l(P[v_{p-1}^2,r])$ is called a \emph{short} $p$-chain.

\end{proof}

Finally, we prove Lemma~\ref{lem:iteration}, which we restate here for convenience.

\setcounter{lemma}{5}
\begin{lemma}
Iteration $i$ runs in polynomial time and either finds a nonempty $4/3$-approximate
tour set or finds that every branch at the root is either a long $p$-chains or 1-branch.
\end{lemma}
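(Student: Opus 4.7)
The plan is to verify the iteration subroutine satisfies both halves of the lemma by unpacking its three steps. Polynomial runtime is the routine half. I would argue that simplification in step 1 terminates after polynomially many safe-operation applications because each of condense, unzip, group, unite, and slide strictly decreases a simple potential such as the number of edges plus vertices and no operation reintroduces structure that another eliminates; each individual application is clearly polynomial. Step 2 is a scan of the root's children. Step 3 locates a minimally unsettled branch via a single bottom-up traversal and then invokes one of Lemmas~\ref{lem:sibling_pchain},~\ref{lem:trident}, or~\ref{lem:short_pchain}, each of which is polynomial.

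The substantive content is the structural case analysis showing that cases (i)--(iii) are exhaustive for the minimally unsettled branch $B$ selected in step 3. Let $(u,v)$ be the stem of $B$. Every child branch at $v$ is settled by minimality, so each is either a 1-branch or a long $q$-chain, and the simplified invariant that no non-root vertex has degree two forces $v$ to have at least two children. The exclusions from case (i) and case (ii) bound the long-chain children by one and the 1-branch children by two, leaving only a short list of configurations, which I would dispatch in turn. For two 1-branch children, Lemma~\ref{lem:2_chain} forces the resulting 2-branch to be a 2-chain with three leaf children, a contradiction. For one long $(q-1)$-chain child paired with one 1-branch leaf child, letting $D_1 \in (q-2, q-1]$ be the chain's demand and $d_2 \in (0,1]$ the leaf demand, either $D_1 + d_2 \le q-1$ and slide applies (the stem shares its traffic $q-1$ with the chain's stem), or $D_1 + d_2 \in (q-1, q]$ and unzip applies (the stem's traffic $q$ equals $(q-1) + 1$), so the configuration is incompatible with simplified. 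For one long $(q-1)$-chain child and two leaf 1-branch children, unite excludes leaf sum $\le 1$ and unzip excludes $D_1 + d_2 + d_3 > q$, so the surviving regime is the $p$-chain structure with $p = q$; being unsettled, it must be short, and case (iii) applies.

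The main obstacle is this last sub-case, where one must simultaneously track the stem's traffic equation $f(e_0)=\lceil D_1+d_2+d_3\rceil$, the preconditions of unite, slide, and unzip, the demand bound $D_1 \in (q-2,q-1]$ supplied by long-chain-ness of the child, and the $p$-chain leaf-sum condition $d_2+d_3\in(1,1.5]$ together with the long/short distinction, in order to confirm that no other structure is consistent with a simplified minimally unsettled branch. Once exhaustiveness is secured, I would conclude by noting that each of cases (i)--(iii) yields a nonempty $4/3$-approximate tour set via the corresponding Lemma~\ref{lem:sibling_pchain},~\ref{lem:trident}, or~\ref{lem:short_pchain}, while a short-circuit at step 2 matches the second clause of the lemma.
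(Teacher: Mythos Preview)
Your overall strategy matches the paper's: establish exhaustiveness of cases (i)--(iii) for a minimally unsettled branch, then invoke the corresponding lemma. Two steps, however, need repair.

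First, the termination argument for simplification does not work as stated. The \textbf{group} operation \emph{adds} a vertex and an edge, so it strictly increases ``edges plus vertices''; \textbf{slide} leaves that count unchanged. The paper handles this by a staged argument: condense, unite, unzip, and splice each shrink the graph; each slide produces a degree-two vertex which is then spliced (so slide followed by splice shrinks the graph); and once those are exhausted, each leaf can participate in at most one group per iteration. A single uniform potential of the kind you propose does not bound the process.

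Second, in your $(1,2)$ sub-case you conclude ``the surviving regime is the $p$-chain structure with $p=q$,'' but the bound you record, $D_1\in(q-2,\,q-1]$, is too weak to force the leaf-sum condition $d_2+d_3\le 1.5$ that the $p$-chain definition requires: from $D_1+d_2+d_3\le q$ you only get $d_2+d_3<2$. The missing ingredient is the sharper lower bound $D_1>q-1.5$, which follows by induction from the $2$-chain base case (demand $>1.5$) together with each higher level contributing more than one unit; combined with $D_1+d_2+d_3\le q$ this yields $d_2+d_3<1.5$. The paper relies on this implicitly here and states it explicitly in the proof of Lemma~\ref{lem:sibling_pchain} (``the total demand in a $p$-chain is between $p-0.5$ and $p$''). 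Without it, the configuration with $d_2+d_3\in(1.5,2)$ is not ruled out and $B$ need not be a $p$-chain, so case~(iii) would not apply.
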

\begin{proof}
To prove this lemma, we first must show that any minimally unsettled branch $B$ in a simplified instance must be in at least one of the three identified cases.  Let $B$ be such a branch. $B$ must have child branches, since it is unsettled. Since it is minimally unsettled, every child branch is settled (i.e. either a 1-branch or a long $p$-chain). 

If $B$ has at least two long $p$-chains as child branches, then case $i$ holds.

If $B$ has no long $p$-chains as child branches, then all child branches are 1-branches.  By Lemma~\ref{lem:2_chain}, in a simplified instance all 2-branches are 2-chains.  Since all 2-chains are long, and thus settled, $B$ must be a $j$-branch for some $j \geq 3$.  Since the unzip operation is unavailable, then there are at least four 1-branches as child branches of $B$, so case $ii$ holds.

If $B$ has exactly one long $p$-chain as a child branch and case $ii$ doesn't hold, then $B$ must be a short $p'$-chain.  If there were no 1-branches as child branches, then the degree-two vertex could be spliced.  If there were exactly one 1-branch then either $p=p'$ and a slide operation would be available, or $p = p'-1$ and an unzip operation would be available. Since case $ii$ doesn't hold, there are exactly two 1-branches as child branches.  Since an unzip operation is unavailable, $p = p'-1$, and since a unite operation is unavailable, the demand of the two 1-branches (which are both leaves by the condense operation) sum to greater than one. Since the instance is simplified, then $B$ is a $p'$-chain.  But since $B$ is unsettled, it must be a short $p'$-chain.  Therefore case $iii$ holds.

Since $B$ is covered by some case, then the corresponding lemma (Lemma \ref{lem:sibling_pchain}, \ref{lem:trident}, or \ref{lem:short_pchain}) guarantees a nonempty $4/3$-approximate tour set.

If no unsettled branch can be found, then all branches at the root must be settled.

Finally, to see that the iteration runs in polynomial time, note that each operation can be performed in constant time.  Condense, unite, unzip, (and splicing) all make the graph smaller, so they are exhausted in linear time.  Subsequently, if condense and unite are unavailable, then each slide results in a degree two vertex remaining to be spliced, also making the graph smaller.  After all other operations are exhausted, each leaf can participate in at most one group operation in a given iteration.  Each of the three cases can likewise be handled in linear time.
\end{proof}

\section{Resolving $p$-Chains}\label{sec:resolving_pchains}

\subsection{Cascades}\label{sec:cascades}

In order to prove Lemmas~\ref{lem:root_pchain} and \ref{lem:sibling_pchain} we describe a specific group of tours that collectively covers the demand of a long $p$-chain.

The labeling on $p$-chains gives a natural bottom-up ordering on the leaves: $v_1^0,v_1^1,v_1^2,v_2^1,v_2^2,$ $...,v_{p-1}^1,v_{p-1}^2$, where the order is determined first by level and then by rank.
For a long $p$-chain, we define a \emph{cascade} to be the sequence of $p$
tours in which each tour considers the leaves in this bottom-up order and:
\begin{enumerate}
\item Visits and covers all demand from the first leaf with remaining demand.
\item While the vehicle has spare capacity and there is unmet demand in the branch,
determines the lowest level $i$ with leaves
with unmet demand and covers as much demand as possible from $v_i^2$.
\item Returns to the depot.
\end{enumerate}

Let the resulting tours be $t_1,...,t_p$.  Tour $t_1$ first covers all demand from
$v_1^0$ and then covers $1-d(v_1^0)$ demand from $v_1^2$.  Since by definition, a $p$-chain is simplified, the unite operation
is unavailable, so $d(v_1^0)+d(v_1^2)
> 1$, implying the vehicle is full.  The second tour $t_2$ covers all demand from $v_1^1$,
covers all remaining demand at $v_1^2$ (since $d(v_1^0)+d(v_1^1)+d(v_1^2)
< 2$) and covers some demand at $v_2^2$.  Since the slide operation is unavailable,
$d(v_1^0)+d(v_1^1)+d(v_1^2) +d(v_2^2)
> 3$, so the vehicle is full.  Note that both vehicles have been
full, and after two tours no demand
remains at level $1$.  This pattern continues. Tour $t_i$ for $3 \leq i < p$ covers
all demand at $v_{i-1}^1$, all remaining demand at $v_{i-1}^2$, and some demand at
$v_i^2$.  Inductively, since slide operation is unavailable and all previous vehicles have
been full, this vehicle must also be full.  After $t_i$, no demand remains below level
$i$.  Since all vehicles have been full, there is less than one unit of demand remaining
for $t_p$ (and no demand remains below level $p-1$), so $t_p$ covers all demand from
$v_{p-1}^1$ and all remaining demand from $v_{p-1}^2$. (See Figure~\ref{fig:cascade}).

This cascading pattern results in $i$ tours traversing $e_i^0$,
one tour traversing $e_i^1$, two tours traversing $e_i^2$ for all
$i$, and $p$ tours traversing all edges in the path from the depot $r$ to $v_p^0$.
Note in particular that these values match the lower bounds given by the edge
traffic, for $e_i^0$ and $e_i^1$.  The excess cost, therefore, comes from doubling
the traffic lower bound on the $e_i^2$ edges as well as extra traffic along the 
path to the depot.

\subsection{Proofs of Lemmas~\ref{lem:root_pchain} and \ref{lem:sibling_pchain}}\label{sec:proofs}

We restate the lemmas here for convenience.  Recall that a branch is \emph{resolved} if all of its demand is covered by a $4/3$-approximate tour group. That is, a set of tours whose total cost is at most $4/3$ times the reduction to the lower bound $LB$ that results from removing the demand of the branch.

\setcounter{lemma}{3}
\begin{lemma}
Long $p$-chains can be resolved at the root.
\end{lemma}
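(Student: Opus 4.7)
My plan is to prove the lemma by showing that the cascade tour set defined in Section~\ref{sec:cascades} is itself a $4/3$-approximate tour set for a long $p$-chain whose stem $e_p^0$ is incident to the depot $r$. Writing $a = l(P[v_p^0, r])$ (which in this root case equals $l(e_p^0)$), $L_j = l(e_j^0)$, $L_j^1 = l(e_j^1)$, and $L_j^2 = l(e_j^2)$, the per-edge traffic counts already recorded at the end of Section~\ref{sec:cascades} give the total cascade cost
\[
C \;=\; 2pa \;+\; \sum_{j=2}^{p-1} 2j\,L_j \;+\; 2L_1 \;+\; 2\sum_{j=1}^{p-1} L_j^1 \;+\; 4\sum_{j=1}^{p-1} L_j^2.
\]
Because the chain sits at the root, the reduction $\Delta LB$ from removing its demand equals $\sum_{e} 2 l(e) f(e)$ over the chain's edges; substituting the traffics ($p$ for $e_p^0$, $j$ for each chain stem $e_j^0$, and $1$ for every leaf edge) yields the same expression as $C$ but with coefficient $2$ in place of $4$ on each $L_j^2$. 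Hence $C - \Delta LB = 2\sum_{j=1}^{p-1} L_j^2$, and the lemma reduces to verifying $6 \sum_{j=1}^{p-1} L_j^2 \leq \Delta LB$.

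To finish I plan to combine two structural constraints. From the labeling, $L_j^1 \geq L_j^2$ at every level, with the stronger $L_1 \geq L_1^1 \geq L_1^2$ at the base 2-chain. From the long-chain hypothesis (unrolled level by level), $L_j^2 \leq a + \sum_{k=j+1}^{p-1} L_k$ for every $j = 1,\dots,p-1$. The idea is to spend the base ordering to cancel the $j=1$ portion of $6\sum L_j^2$ exactly against $2L_1 + 2L_1^1 + 2L_1^2$; then at each higher level use $L_j^1 \geq L_j^2$ to replace $2L_j^1 - 4L_j^2$ by the cheaper bound $-2L_j^2$; and finally bound the remaining $-2\sum_{j=2}^{p-1} L_j^2$ using the long-chain inequality. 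After interchanging the resulting double sum and collecting like terms, I expect $\Delta LB - 6\sum L_j^2$ to simplify to $4a + 4\sum_{j=2}^{p-1} L_j$, which is manifestly nonnegative.

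The main obstacle is that neither constraint family is strong enough on its own. Applying the long-chain bound directly to the full sum $\sum_{j=1}^{p-1} L_j^2$ produces a $2(p-1)a$ term, which exceeds the $2pa$ available in $\Delta LB$ for every $p \geq 3$; and the ordering inequalities by themselves never mention the depot-path length $a$, so they cannot close the gap on their own. The essential trick is therefore to use the base-level ordering (which, precisely because there are three leaf siblings, is tight enough to absorb the entire $6L_1^2$ by itself) only on the $j=1$ block, and reserve the long-chain bound for $j \geq 2$, where its $(p-2)a$ contribution fits comfortably inside the available $2pa$ and its $(k-2)L_k$ coefficients leave slack inside the $2jL_j$ terms of $\Delta LB$.
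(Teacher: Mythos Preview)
Your proposal is correct and follows essentially the same route as the paper: use the cascade tour set, observe that the excess cost over $\Delta LB$ is exactly $2\sum_{j} L_j^2$, and then verify $3\sum_j L_j^2 \le \Delta LB$ by absorbing the $j=1$ term with the base ordering $L_1 \ge L_1^1 \ge L_1^2$ and each $j\ge 2$ term with $L_j^1 \ge L_j^2$ together with the long-chain bound $L_j^2 \le l(P[v_{j+1}^0,r])$. One small caveat: the long-chain inequality you state for ``every $j=1,\dots,p-1$'' is only guaranteed for $j\ge 2$ (the $2$-chain carries no such constraint), but since your plan handles $j=1$ separately this does not affect the argument.
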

\begin{proof}

Let $B$ be a long $p$-chain at the root (depot).  Consider the cascade on $B$.
 As described in Section~\ref{sec:cascades}, the cost of the $p$ tours in the cascade is 
 $$2\big(p\cdot l(e_p^0)+\sum_{i=1}^{p-1} i\cdot l(e_i^0) + 2l(e_i^2) + l(e_i^1)\big)$$
 since the
 cost of the path to the depot is zero.  Additionally the cascade covers all demand
 in $B$, so the reduction to $LB$ after covering $B$ is: $$\sum_{e\in B}LB(e)
 = \sum_{e=(u,v)\in B}2\cdot l(e)\cdot\lceil d(T_v)\rceil = 2\big(p\cdot l(e_p^0)+\sum_{i=1}^{p-1} i\cdot l(e_i^0) + l(e_i^2) + l(e_i^1)\big)$$

 Taking the ratio of cost to reduction in lower bound gives our result:

 $$\frac{2\big(p\cdot l(e_p^0)+\sum_{i=1}^{p-1} i\cdot l(e_i^0) + 2l(e_i^2) + l(e_i^1)\big)}{2\big(p\cdot l(e_p^0)+\sum_
 {i=1}^{p-1} i\cdot l(e_i^0) + l(e_i^2) + l(e_i^1)\big)}=1+\frac{\sum_{i=1}^{p-1}  l(e_i^2) }{p\cdot l(e_p^0)+\sum_
 {i=1}^{p-1} i\cdot l(e_i^0) + l(e_i^2) + l(e_i^1)}$$

 $$\leq 1+\frac{\big(l(e_1^2)\big) + \big(\sum_{i=2}^{p-1} l(e_i^2)\big) }{\big(l(e_1^0) +l
 (e_1^1)+ l(e_1^2)\big) + \big(\sum_
 {i=2}^{p-1} 3\cdot l(e_{i+1}^0) + l(e_i^2) + l(e_i^1)\big)}$$

 $$\leq 1+\frac{\big(l(e_1^2)\big) + \big(\sum_{i=2}^{p-1} l(e_i^2)\big) }{\big(3\cdot
 l(e_1^2)\big) + \big(\sum_
 {i=2}^{p-1} 3\cdot l(e_{i}^2)\big)} \leq \frac{4}{3}$$

\end{proof}

\begin{lemma}
A long $p$-chain and long $p'$-chain can be resolved together if they are sibling
branches. 
\end{lemma}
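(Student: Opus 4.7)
The plan is to construct a tour set by cascading each sibling chain independently, then bound the approximation ratio using Lemma~\ref{lem:root_pchain} on each chain's internal analysis together with a careful accounting of the shared path above the chains' common parent. Let $C_1$ denote the long $p$-chain and $C_2$ the long $p'$-chain with common parent vertex $w$, and write $A = l(P[w,r])$.

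The key structural fact is the demand bound $d(C_j) > p_j - 1/2$ for every long $p_j$-chain, proved by induction on $p_j$. The base case $p_j = 2$ is immediate from Lemma~\ref{lem:2_chain} (the three top leaves of a $2$-chain have total demand $> 1.5$). For $p_j \geq 3$, the $p_j$-chain decomposes into a long $(p_j - 1)$-chain (with demand $> p_j - 3/2$ by induction) together with two top-level leaves $v_{p_j-1}^1$ and $v_{p_j-1}^2$ whose demands sum to more than $1$ by the $p_j$-chain definition; adding gives $d(C_j) > p_j - 1/2$, and hence $d(C_1) + d(C_2) > p + p' - 1$.

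Given this, I would run the cascade of Section~\ref{sec:cascades} on each chain, producing $p + p'$ tours in total; no merging of the two final cascade tours is possible because their residual demands sum to $d(C_1) + d(C_2) - (p + p' - 2) > 1$. The total cost is $2A(p + p') + C_1^{\mathrm{int}} + C_2^{\mathrm{int}}$, where $C_j^{\mathrm{int}}$ is the internal cascade cost of $C_j$ from the proof of Lemma~\ref{lem:root_pchain}. For the reduction to $LB$, the internal lower bounds $L_1^{\mathrm{int}}$ and $L_2^{\mathrm{int}}$ of the two chains are recovered in full, and on every edge $e \in P[w,r]$ with lower endpoint $v$ the traffic drops by $\ceil{d(T_v)} - \ceil{d(T_v) - d(C_1) - d(C_2)} \geq \floor{d(C_1)+d(C_2)} \geq p+p'-1$, giving a path $LB$ reduction of at least $2A(p + p' - 1)$.

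To verify the $4/3$ bound I would check $3 \cdot (\text{cost}) \leq 4 \cdot (\text{LB reduction})$, which rearranges to $3(C_1^{\mathrm{int}} + C_2^{\mathrm{int}}) - 4(L_1^{\mathrm{int}} + L_2^{\mathrm{int}}) \leq 2A(p + p' - 4)$. Because $p + p' \geq 4$, the right-hand side is nonnegative, and Lemma~\ref{lem:root_pchain} applied separately to each chain gives $3 C_j^{\mathrm{int}} \leq 4 L_j^{\mathrm{int}}$, so the left-hand side is nonpositive. The main obstacle is securing the demand bound $d(C_j) > p_j - 1/2$: without it, $d(C_1)+d(C_2)$ could drop to just above $p+p'-2$, pulling the path $LB$ reduction down to $2A(p+p'-2)$ and forcing the key inequality to require $p+p' \geq 6$, which would fail precisely in the tight case of two sibling $2$-chains.
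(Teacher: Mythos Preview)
Your construction (cascading each sibling chain) and your accounting of cost, of the demand lower bound $d(C_j)>p_j-1/2$, and of the path $LB$ reduction are all correct and match the paper. The gap is in the very last step: you claim that Lemma~\ref{lem:root_pchain} gives $3\,C_j^{\mathrm{int}}\le 4\,L_j^{\mathrm{int}}$ for each chain, and hence that the left-hand side of your key inequality is nonpositive. That claim is false. Lemma~\ref{lem:root_pchain} is about long chains \emph{at the root}, and the ``long'' condition $l(e_i^2)<l(P[v_{i+1}^0,r])$ depends on the global root $r$. A chain that is long when it sits below $w$ (so that $l(P[v_{i+1}^0,r])$ includes the extra length $A=l(P[w,r])$) need not be long if you relocate the root to $w$; consequently the purely internal ratio $C_j^{\mathrm{int}}/L_j^{\mathrm{int}}$ can exceed $4/3$. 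Concretely, take a long $3$-chain with $l(e_3^0)=l(e_2^0)=0$, $l(e_2^1)=l(e_2^2)=1$, $l(e_1^0)=l(e_1^1)=l(e_1^2)=1$, and $A=2$: it is long (since $l(e_2^2)=1<A$), yet $C_j^{\mathrm{int}}=14$ and $L_j^{\mathrm{int}}=10$, giving ratio $1.4>4/3$ and $3C_j^{\mathrm{int}}-4L_j^{\mathrm{int}}=2>0$.

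The fix is not to decouple the internal and path contributions but to let the path length $A$ participate in the per-level bounds. Using $l(e_i^1)\ge l(e_i^2)$ and the long condition $l(e_i^2)<l(P[v_{i+1}^0,r])=A+\sum_{k>i}l(e_k^0)$ for $i\ge 2$, a short computation gives the sharper per-chain estimate
\[
3\,C_j^{\mathrm{int}}-4\,L_j^{\mathrm{int}}\ \le\ 2A\,(p_j-2),
\]
and summing over the two siblings yields exactly your target $3(C_1^{\mathrm{int}}+C_2^{\mathrm{int}})-4(L_1^{\mathrm{int}}+L_2^{\mathrm{int}})\le 2A(p+p'-4)$. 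This is essentially what the paper does: it keeps the $l(P[v_{i+1}^0,r])$ terms (which carry the $A$) inside the denominator and bounds each level $i\ge 2$ by $3l(e_i^2)\le l(e_i^1)+l(e_i^2)+l(P[v_{i+1}^0,r])$, rather than invoking Lemma~\ref{lem:root_pchain} as a black box.
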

\begin{proof}

Let $u$ be a vertex with child branches $B$ a long $p$-chain and $B'$ a long $p'$-chain.

Consider the cascades on $B$ and $B'$. The cost of these $p+p'$ tours is 
$$2\big[\big(p\cdot l(e_p^0)+\sum_{i=1}^{p-1} i\cdot l(e_i^0) + 2l(e_i^2) + l(e_i^1)\big)
+ \big(p'\cdot l({e'}_{p'}^0)+\sum_{i=1}^{p'-1} i\cdot l({e'}_i^0) + 2l({e'}_i^2)
+ l({e'}_i^1)\big) + (p+p')l(P[u,r])\big]$$

Additionally the cascade covers all demand
in $B$ and $B'$, so the reduction to $LB$ is $\sum_{e\in B \cup B'}LB
(e)
= \sum_{e\in B \cup B'}2\cdot l(e)\cdot f(e) $, which is 
$$2\big[\big(p\cdot l(e_p^0)+\sum_{i=1}^{p-1} i\cdot l(e_i^0) + l(e_i^2) + l(e_i^1)\big)
+ \big(p'\cdot l({e'}_{p'}^0)+\sum_{i=1}^{p'-1} i\cdot l({e'}_i^0) + l({e'}_i^2)
+ l({e'}_i^1)\big)
+ (p+p'-1)l(P[u,r])\big]$$
Note that the $p+p'-1$ factor in the reduction to the edges along $P[u,r]$ arises because by definition the total demand in a $p$-chain is between $p-0.5$ and $p$, so covering all demand in $B$ and $B'$ reduces the demand in $T_u$ by at least $p+p'-1$.  Rearranging the terms, this reduction to the lower bound is greater than:
$$2\big[\big(\sum_{i=0}^{2}l(e_1^i)+\sum_{i=2}^{p-1} l(P[v_{i+1}^0,r]) + l(e_i^2) + l(e_i^1)\big)
$$ $$+ \big(\sum_{i=0}^{2}l({e'}_1^i)+\sum_{i=2}^{p'-1} l(P[{v'}_{i+1}^0,r]) + l({e'}_i^2)
+ l({e'}_i^1)\big)
+ 3l(P[u,r])\big]$$
$$= 2(X+ Y + X' + Y' +Z)$$

Where $X = \sum_{i=0}^{2}l(e_1^i)$, $Y = \sum_{i=2}^{p-1} l(P[v_{i+1}^0,r]) + l(e_i^2) + l(e_i^1)$, $X' = \sum_{i=0}^{2}l({e'}_1^i)$, $Y' = \sum_{i=2}^{p'-1} l(P[{v'}_{i+1}^0,r]) + l({e'}_i^2)
+ l({e'}_i^1)$, and $Z= 3l(P[u,r])$.

Taking the ratio of cost to reduction is therefore at most,
$$1+\frac{\sum_{i=1}^{p-1} l(e_i^2) + \sum_{i=1}^{p'-1} l({e'}_i^2)
+ l(P[u,r])}{X+ Y + X' + Y' +Z} \leq \frac{4}{3}$$

Where the final inequality comes from noting that, by construction, $l(e_1^2)\leq X/3$ and  $l({e'}_1^2)\leq X'/3$, and by definition of a long $p$-chain, for all $2\leq i < p$, $l({e}_i^2) \leq \min\{l({e}_i^1),l({e}_i^2),l(P[{v}_{i+1}^0,r])\}$ , so $\sum_{i=2}^{p-1} l(e_i^2) \leq Y/3$ and $\sum_{i=2}^{p-1} l({e'}_i^2) \leq Y'/3$.

\end{proof}

\bibliography{main}

\begin{thebibliography}{10}

\bibitem{altinkemer1987}
Kemal Altinkemer and Bezalel Gavish.
\newblock Heuristics for unequal weight delivery problems with a fixed error
  guarantee.
\newblock {\em Operations Research Letters}, 6(4):149--158, 1987.

\bibitem{asano}
Tetsuo Asano, Naoki Katoh, and Kazuhiro Kawashima.
\newblock A new approximation algorithm for the capacitated vehicle routing
  problem on a tree.
\newblock {\em Journal of Combinatorial Optimization}, 5(2):213--231, 2001.

\bibitem{asano1997}
Tetsuo Asano, Naoki Katoh, Hisao Tamaki, and Takeshi Tokuyama.
\newblock Covering points in the plane by $k$-tours: towards a polynomial time
  approximation scheme for general $k$.
\newblock In {\em Proceedings of the twenty-ninth annual ACM symposium on
  Theory of computing}, pages 275--283. ACM, 1997.

\bibitem{becker_hwy_dim}
Amariah Becker, Philip~N Klein, and David Saulpic.
\newblock Polynomial-time approximation schemes for $k$-center and
  bounded-capacity vehicle routing in metrics with bounded highway dimension.
\newblock {\em arXiv preprint arXiv:1707.08270}, 2017.

\bibitem{beckerquasi}
Amariah Becker, Philip~N Klein, and David Saulpic.
\newblock A quasi-polynomial-time approximation scheme for vehicle routing on
  planar and bounded-genus graphs.
\newblock In {\em LIPIcs-Leibniz International Proceedings in Informatics},
  volume~87. Schloss Dagstuhl-Leibniz-Zentrum fuer Informatik, 2017.

\bibitem{christofides}
Nicos Christofides.
\newblock Worst-case analysis of a new heuristic for the travelling salesman
  problem.
\newblock Technical report, Carnegie-Mellon Univ Pittsburgh Pa Management
  Sciences Research Group, 1976.

\bibitem{das2010}
Aparna Das and Claire Mathieu.
\newblock A quasi-polynomial time approximation scheme for {Euclidean}
  capacitated vehicle routing.
\newblock In {\em Proceedings of the twenty-first annual ACM-SIAM symposium on
  Discrete Algorithms}, pages 390--403. SIAM, 2010.

\bibitem{golden}
Bruce~L Golden and Richard~T Wong.
\newblock Capacitated arc routing problems.
\newblock {\em Networks}, 11(3):305--315, 1981.

\bibitem{haimovich}
Mark Haimovich and AHG Rinnooy~Kan.
\newblock Bounds and heuristics for capacitated routing problems.
\newblock {\em Mathematics of operations Research}, 10(4):527--542, 1985.

\bibitem{hamaguchi}
Shin-ya Hamaguchi and Naoki Katoh.
\newblock A capacitated vehicle routing problem on a tree.
\newblock In {\em International Symposium on Algorithms and Computation}, pages
  399--407. Springer, 1998.

\bibitem{khachay2016}
Michael Khachay and Roman Dubinin.
\newblock {PTAS} for the {Euclidean} capacitated vehicle routing problem in
  $\mathbb{R}^d $.
\newblock In {\em International Conference on Discrete Optimization and
  Operations Research}, pages 193--205. Springer, 2016.

\bibitem{labbe}
Martine Labb{\'e}, Gilbert Laporte, and H{\'e}lene Mercure.
\newblock Capacitated vehicle routing on trees.
\newblock {\em Operations Research}, 39(4):616--622, 1991.

\bibitem{papadimitriou}
Christos~H Papadimitriou and Mihalis Yannakakis.
\newblock The traveling salesman problem with distances one and two.
\newblock {\em Mathematics of Operations Research}, 18(1):1--11, 1993.

\end{thebibliography}

\end{document}